\documentclass[aps,pra,reprint,superscriptaddress,longbibliography,floatfix]{revtex4-1}
\usepackage{graphicx}
\usepackage{xcolor}
\usepackage{amsmath}
\usepackage{amssymb}
\usepackage{braket}
\usepackage{booktabs}
\usepackage{amsthm}
\usepackage{algpseudocode}
\usepackage{silence}
\usepackage{hyperref}
\DeclareMathOperator{\Tr}{Tr}
\providecommand{\op}[2]{\ket{#1}\!\bra{#2}}
\newtheorem{theorem}{Theorem}
\newtheorem{lemma}{Lemma}
\newtheorem{corollary}{Corollary}
\newcounter{algorithm}
\renewcommand{\thealgorithm}{\arabic{algorithm}}

\graphicspath{{figures/}}

\makeatletter
\def\@bibdataout@aps{%
 \immediate\write\@bibdataout{%
  @CONTROL{%
   apsrev42Control%
   \longbibliography@sw{%
    ,author="48",editor="1",pages="0",title="0",year="1"%
   }{%
    ,author="48",editor="1",pages="0",title="",year="1"%
   }%
  }%
 }%
 \if@filesw
  \immediate\write\@auxout{\string\citation{apsrev42Control}}%
 \fi
}%
\newif\ifappendixtoc@active
\newif\ifappendixtoc@printing
\newcommand{\appendixtableofcontents}{%
  \begingroup
    \let\addcontentsline\@gobblethree
    \section*{Appendix Contents}%
  \endgroup
  \begingroup
    \appendixtoc@activefalse
    \appendixtoc@printingfalse
    \let\appendixtoc@contentsline\contentsline
    \def\appendix{%
      \appendixtoc@activetrue
      \appendixtoc@printingtrue
    }%
    \def\tocdepth@munge{%
      \ifappendixtoc@active
        \appendixtoc@printingfalse
      \fi
    }%
    \def\tocdepth@restore{%
      \ifappendixtoc@active
        \appendixtoc@printingtrue
      \fi
    }%
    \def\contentsline##1##2##3##4{%
      \ifappendixtoc@printing
        \appendixtoc@contentsline{##1}{##2}{##3}{##4}%
      \fi
    }%
    \@starttoc{toc}%
  \endgroup
}
\makeatother

\begin{document}

\title{Zeno-Enhanced Probabilistic Error Cancellation with Quantum Error Detection Codes}
\author{Yi Yuan}
\affiliation{State Key Laboratory of Low Dimensional Quantum Physics, Department of Physics, Tsinghua University, Beijing, 100084, China}
\affiliation{Frontier Science Center for Quantum Information, Beijing 100184, China}

\author{Yuanchen Zhao}
\affiliation{State Key Laboratory of Low Dimensional Quantum Physics, Department of Physics, Tsinghua University, Beijing, 100084, China}
\affiliation{Frontier Science Center for Quantum Information, Beijing 100184, China}

\author{Dong E. Liu}
\email{Corresponding to: dongeliu@mail.tsinghua.edu.cn}
\affiliation{State Key Laboratory of Low Dimensional Quantum Physics, Department of Physics, Tsinghua University, Beijing, 100084, China}
\affiliation{Frontier Science Center for Quantum Information, Beijing 100184, China}
\affiliation{Beijing Academy of Quantum Information Sciences, Beijing 100193, China}
\affiliation{Hefei National Laboratory, Hefei 230088, China}

\begin{abstract}
Probabilistic error cancellation (PEC) is unbiased in principle, but its sampling overhead grows exponentially with the noise-weighted circuit volume, while quantum error-detecting codes (QEDCs) suppress many physical faults through stabilizer post-selection but leave an undetectable logical residue. We show that these two limitations are complementary: stabilizer post-selection first filters physical noise into a weaker post-selected logical channel, and PEC is then applied only to cancel the residual noise that survives detection. This leads to a feedback-free QED+PEC protocol that interleaves Clifford logical blocks, stabilizer measurements, post-selection, and probabilistic cancellation on the accepted logical channel, without real-time decoding or active recovery. The main technical challenge is that post-selection couples accepted fault branches through stabilizer-commutation constraints, destroying the direct sparse Pauli-Lindblad factorization used in bare PEC. We overcome this by constructing the inverse channel perturbatively: for any fixed order $K$, only accepted fault branches up to order $K$ are enumerated, reducing preprocessing from $2^m$ branches to $O(m^K) $ per block. The resulting order-$K$ protocol cancels the normalized post-selected channel through degree $K$, giving a per-block error $O(W^{K+1})$ and at-most-linear accumulation over blocks. For logical GHZ-state preparation with the $[[n,n-2,2]] $ Iceberg code under circuit-level depolarizing noise on the logical-gate blocks and ideal stabilizer measurements, the first-order protocol scales to $n=200$ physical qubits and reduces the sampling overhead by three to four orders of magnitude relative to standard PEC while maintaining fidelity $F\simeq 0.956$. This benchmark is an ideal-measurement, cheap-detection demonstration of the possible advantage, not a full circuit-level hardware claim. Additional syndrome-noise checks show that readout-only syndrome flips mainly increase the post-selection cost without a systematic fidelity loss, while explicit noisy GHZ-assisted extraction of the global Iceberg stabilizers can exceed the cost of pure PEC. A fixed-$n$ sweep further shows that, under noisy extraction, the optimal detection interval is finite rather than the smallest tested value. Analytically solvable toy models identify the ideal-measurement behavior as a discrete-Zeno trade-off: more frequent stabilizer projections reduce the residual PEC cost faster than they increase the post-selection penalty. Thus, when syndrome extraction is sufficiently cheap, error detection is not merely an additional overhead, but a mechanism that reshapes the effective channel PEC must invert.
\end{abstract}
\maketitle
\section{Introduction}

Current quantum computing hardware sits between the noisy intermediate-scale quantum (NISQ) regime~\cite{5} and full fault-tolerant quantum computing (FTQC), a transitional resource regime often framed as early fault-tolerant quantum computing (EFTQC)~\cite{katabarwa2024early}. On the experimental side, this transition is already visible: utility-scale NISQ experiments have produced expectation-value estimates whose evaluation is beyond exact classical simulation~\cite{kim2023evidence}, while early logical processors built from superconducting~\cite{google2023suppressing,google2025quantum}, trapped-ion~\cite{iqbal2024non}, and reconfigurable neutral-atom~\cite{bluvstein2024logical,bluvstein2026fault} platforms have demonstrated logical qubits and small fault-tolerant primitives. Quantum error correction (QEC) provides an asymptotic route to scalability, and the threshold theorem~\cite{knill1997theory,aharonov1997fault,knill1998resilient,shor1996fault,preskill1998fault} guarantees that, once physical error rates fall below a hardware-specific threshold, arbitrary quantum computation can be performed reliably by simply scaling the code distance. However, the code length, syndrome-extraction depth, decoder latency, and physical-qubit overhead required to reach algorithmic-level fault tolerance remain substantial~\cite{10,google2025quantum,12}. The pressing near-term question is therefore how much useful computational power can be extracted before full fault tolerance becomes available.

Quantum error mitigation (QEM) was developed precisely to address this question on NISQ-era hardware, where active QEC is not yet feasible but a partial reduction of noise-induced bias is still desirable~\cite{QEM_review,aharonov2025importance}. QEM typically does not rely on complete real-time error-correction feedback, but it can be combined with coding, symmetry verification, and logic-level post-processing. Representative scalable or logic-level mitigation strategies include virtual distillation, Clifford data regression, scalable zero-noise-extrapolation workflows, and symmetry/subspace verification~\cite{huggins2021virtual,
czarnik2021error,kim2023scalable,27,29,30}. Among them, probabilistic error cancellation (PEC) is one of the most widely used methods because it is unbiased in principle: it represents the inverse noise channel as a quasiprobability mixture of realizable noisy operations, yielding an unbiased estimator of the ideal expectation value~\cite{13,14,QEM_review}. The price paid by all such mitigation methods, however, is statistical: their sampling overhead grows exponentially with the noise-weighted circuit volume, and this exponential blow-up has been shown to be a fundamental limit of any unbiased mitigation strategy, not a feature of PEC alone~\cite{16,17,18}. As soon as the circuit volume becomes moderate, QEM alone is therefore not enough to close the gap to FTQC.

The fundamental limits of QEM can be traced to two structural ingredients of QEC that QEM lacks. First, QEC encodes the logical state in an entangled stabilizer subspace and introduces syndrome-resolved subspace filtering, so that physical noise becomes detectable through stabilizer syndromes. Second, QEC uses measurement-conditioned feedback to actively correct errors before they accumulate. Real-time decoding and feedback are demanding on near-term hardware, but the entangled-subspace structure can be incorporated alone, in a feedback-free manner, by using a quantum error-detecting code (QEDC): one encodes into a stabilizer subspace, measures the stabilizers, and post-selects on the trivial syndrome~\cite{19,20}. Distance-two QEDCs in particular detect any single qubit fault at modest hardware overhead. They cannot, however, eliminate residual logical noise from undetected multi-fault events or fault propagation, and their post-selection acceptance probability decays as the circuit volume grows.

This sets up the central question of the present work. Pure PEC has prohibitive sampling overhead; pure QEDC has irreducible residual bias and a shrinking acceptance probability. Can these two approaches be combined into an intermediate protocol whose sampling cost is lower than that of pure PEC, while its bias is lower than that of pure QEDC? Such a hybrid incurs higher sampling cost than QEDC alone (since PEC is still applied). For a finite-order implementation, it also retains bias relative to exact PEC, because only a truncated inverse of the post-selected logical channel is implemented. The relevant question is whether the joint protocol can outperform either component in regimes of practical interest. We refer to this construction as QED+PEC.

A key distinction in this work is between the abstract stabilizer projection and its physical syndrome-extraction circuit. Most of our theory and large-scale simulations first assume ideal stabilizer measurements. This isolates the logical mechanism of QED+PEC and corresponds to a cheap-detection regime. We later test two nonideal syndrome models and show that, for the Iceberg-code benchmark, extraction noise is the dominant practical limitation.

The main obstacle to constructing such a protocol is the scalability of the inverse channel. PEC requires inverting the noise channel acting on a logical block. For a generic noise model the number of independent process parameters grows exponentially with the number of logical qubits, so direct inversion via process tomography is classically intractable~\cite{23,24}. The sparse Pauli--Lindblad framework of Ref.~\cite{IBM2022} makes bare PEC scalable on NISQ devices by exploiting the fact that hardware noise is well captured by a small set of local Pauli generators, so that the inverse channel factorizes across these generators~\cite{25}. This factorization is, however, broken by error detection: the post-selection acceptance condition couples contributions from different fault locations through stabilizer commutation, and naively expanding all sparse fault subsets within a block produces exponentially many branches, each with a different post-selected weight.

Our key technical insight is that the physically relevant regime is perturbative in the per-block noise weight. Let $m=O(cTn)$ denote the number of elementary fault locations in a QED+PEC block, where $n$ is the number of physical qubits, $T$ is the number of logical layers between two detection rounds, and $c$ is the physical-layer compilation overhead. Rather than enumerating all $2^m$ fault subsets, we retain only branches with at most $K$ faults; the number of retained branches is $O(m^K)$, which is polynomial in $n$ for any fixed perturbative order $K$. In the Clifford logical setting studied in this paper, Pauli faults propagate as Pauli faults, and stabilizer commutation can be checked efficiently using the standard stabilizer-tableau formalism~\cite{19,26}. This converts QED+PEC from a conceptually attractive but computationally intractable construction into a scalable perturbative protocol whose accuracy can be systematically improved by increasing $K$.

In this work, we make the following contributions.

\begin{itemize}
    \item \textbf{Feedback-free QED+PEC protocol.}
    We introduce a protocol that alternates blocks of noisy logical computation with QEDC stabilizer measurements and post-selection, followed by PEC on the surviving logical channel. The protocol requires no real-time recovery operation: failed trajectories may be discarded either during execution or in classical post-processing.

    \item \textbf{Perturbative inverse channel of arbitrary order with polynomial preprocessing.}
    We construct the post-selected inverse channel order by order in the per-block total noise weight $W$. For any fixed perturbative order $K$, the number of retained fault branches scales as $O(m^K)$, and the classical preprocessing cost is polynomial in $n$, $T$, and circuit depth. The order-$K$ implementation cancels the aggregate contribution of accepted fault branches through degree $K$ in the normalized post-selected channel, leaving an $O(W^{K+1})$ channel error per QED+PEC block under the stated assumptions. We develop the construction for general $K$ and present the first-order implementation in full detail as the simplest nontrivial case.

    \item \textbf{Accuracy and sampling-cost bounds.}
    For the order-$K$ protocol, we prove a worst-case diamond-norm bound showing that the per-block error scales as $O(W^{K+1})$, and the end-to-end error accumulates at most linearly in the number of QED+PEC blocks. We also analyze the total sampling overhead as the product of a post-selection penalty and a PEC quasiprobability variance factor, showing explicitly how QEDC can reduce the PEC cost by shrinking the residual noise weight $W$.

    \item \textbf{Large-scale ideal-measurement Iceberg-code benchmark and syndrome-noise stress tests.}
    We showcase the first-order protocol on logical GHZ-state preparation using the $[[n,n-2,2]]$ Iceberg code. Because the standard GHZ preparation circuit has depth $O(n)$, the total circuit volume scales quadratically with the number of physical qubits, providing a stringent stress test of the perturbative construction at increasing scale. For circuit-level depolarizing noise on the logical-gate blocks with
    \[
        p_1 = 10^{-4}, \qquad p_2 = 10^{-3},
    \]
    and ideal stabilizer measurements, we simulate systems up to $n=200$ physical qubits, encoding $n-2$ logical qubits. Beyond a single operating point, we systematically vary the detection interval $T$, i.e., the number of logical layers between consecutive QEDC rounds, and chart its joint effect on the residual error and the sampling overhead. In this ideal-measurement regime, more frequent error detection \emph{simultaneously} improves the output fidelity and reduces the total sampling cost. At $n=200$ and detection interval $T=1$, QED+PEC reduces the sampling overhead by roughly three to four orders of magnitude relative to standard PEC while maintaining output fidelity around $F \simeq 0.956$. We then test two departures from the ideal-measurement assumption. A readout-only syndrome-flip model leaves the fidelity statistically unchanged but increases the observed post-selection cost. A circuit-level noisy GHZ-assisted extraction of the global Iceberg stabilizers, in contrast, can exceed the cost of pure PEC; a fixed-$n$ sweep shows that the cost is minimized at a finite detection interval rather than at the smallest tested $T$. The large ideal-measurement advantage is therefore to be read as a demonstration of the QED+PEC mechanism, not as a claim that naive noisy extraction of Iceberg checks is already advantageous on hardware.

    \item \textbf{Zeno-enhanced physical picture.}
    We introduce analytically solvable toy models that isolate the sampling-cost trade-off. Pure PEC prefers infrequent cancellation, because splitting a noise channel into many pieces compounds quasiprobability overhead. In contrast, in an ideal-projection model, QED+PEC benefits from frequent detection: although more frequent projections discard more shots, they reduce the residual logical noise faster than they increase the post-selection penalty. In the idealized model, the frequent-detection limit scales as
    \[
        \exp\!\left[\gamma T\left(1+\frac{4}{N}\right)\right],
    \]
    whereas optimal pure PEC scales as
    \[
        \exp(2\gamma T),
    \]
    giving an exponential separation for $N>4$. This behavior is a discrete analogue of the quantum Zeno effect and quantum
Zeno dynamics~\cite{36,37,facchi2008quantum}. This Zeno-like picture applies to the cheap-detection limit; noisy extraction adds an extra acceptance penalty that can shift the optimal detection interval to a finite value.
\end{itemize}

Our work is closely connected to several lines of prior research. The sparse Pauli--Lindblad PEC framework of van den Berg, Minev, Kandala, and Temme made unencoded PEC scalable under sparse-noise assumptions~\cite{25}. We extend this sparse-noise perspective to the post-selected logical subspace, where the stabilizer acceptance condition must be treated explicitly through a perturbative expansion that can be carried to any fixed order $K$. The Iceberg-code work of Self, Benedetti, and Amaro showed that high-rate distance-two QEDCs can protect expressive circuits on current hardware~\cite{20}. Our protocol complements this approach by canceling the undetected logical noise that remains after QEDC. Piveteau et al. and Suzuki et al. studied combinations of encoded computation and error mitigation in settings closer to error-corrected logical circuits~\cite{21,22}, while symmetry verification and quantum subspace expansion suppress errors through post-processing based on symmetries or subspace projections~\cite{27,IBM2022,29,30}. In contrast, our contribution is a blockwise, feedback-free QED+PEC construction with an explicit perturbative inverse channel and polynomial classical preprocessing at any fixed order $K$. Recent and concurrent works have also begun to explore direct combinations of QEDC and PEC~\cite{32}. Our focus is distinct: we analyze the post-selected sparse-fault expansion, derive explicit QED+PEC sampling rules and error bounds at arbitrary perturbative order, and identify the detection-frequency advantage through both large-$n$ Iceberg-code simulations and analytically solvable Zeno-type models. Finally, recent work on PEC scheduling without error detection shows that pure PEC generally favors fewer cancellation steps~\cite{31}. Our analysis shows that introducing error detection reverses this trade-off, making frequent detection beneficial.

\section{Background}

\subsection{Quantum error mitigation and probabilistic error cancellation}
Quantum error mitigation (QEM) provides a vital strategy for noisy quantum processors.
Rather than protecting the entire quantum state throughout the computation, QEM uses information about the noise process to classically postprocess measurement data and recover, as accurately as possible, the expectation value of a target observable~\cite{13,14,QEM_review}.
Among QEM protocols, probabilistic error cancellation (PEC) is one of the most direct: it represents the inverse of the noise channel as a quasiprobabilistic linear combination of physically implementable operations.
Suppose that the noisy implementation of an ideal operation is described by a channel $\mathcal{N}$.
Formally, if $\mathcal{N}$ is invertible on the relevant operator space, one may write an inverse map $\mathcal{N}^{-1}$.
This inverse is generally Hermiticity-preserving and trace-preserving, but it need not be completely positive, and hence cannot be implemented deterministically as a physical quantum channel.
PEC circumvents this obstruction by decomposing the inverse into a signed combination of implementable operations:
\begin{equation}
    \mathcal{N}^{-1}(\rho)
    =
    \sum_i q_i \mathcal{P}_i(\rho),
    \qquad
    \mathcal{P}_i(\rho)=P_i\rho P_i^\dagger ,
\end{equation}
where the quasiprobabilities $q_i$ may be positive or negative.
Defining
\begin{equation}
    \gamma = \sum_i |q_i|,
    \qquad
    p_i = \frac{|q_i|}{\gamma},
    \qquad
    s_i = \mathrm{sign}(q_i),
\end{equation}
we can rewrite the inverse channel as
\begin{equation}
    \mathcal{N}^{-1}(\rho)
    =
    \gamma
    \sum_i p_i s_i \mathcal{P}_i(\rho).
\end{equation}
Operationally, one samples $\mathcal{P}_i$ with probability $p_i$, runs the corresponding noisy circuit, and multiplies the measured estimator by the classical factor $\gamma s_i$.
The expectation value of this signed estimator reproduces the result of the ideal, noise-inverted evolution.
However, the price of this unbiased reconstruction is an increased variance: to estimate an observable to additive accuracy $\epsilon$, the required number of samples scales as
\begin{equation}
    N_{\rm samp}=O\!\left(\frac{\gamma^2}{\epsilon^2}\right),
\end{equation}
up to observable-dependent constants.

\subsection{Quantum error detection code}
While QEM addresses noise through classical postprocessing, an alternative hardware-level approach is necessary for fault tolerance. Quantum error correction (QEC) protects quantum information by extracting error syndromes and applying recovery operations, thereby preserving the encoded state throughout the computation~\cite{6,7,8,19}.
Quantum error detection (QEDC) takes a simpler route: instead of correcting the state after a nontrivial syndrome is observed, it post-selects on the trivial-syndrome subspace and discards all detected faulty runs.
This makes QEDC particularly attractive in near-term settings, where implementing reliable real-time correction may be more expensive than rejecting corrupted samples. Stabilizer-based error detection has also been demonstrated experimentally in small repetition-code devices~\cite{riste2015detecting}.

We briefly recall the stabilizer formulation.
Let $\mathcal{S}$ be an Abelian stabilizer group and let
\begin{equation}
    \Pi = \frac{1}{|\mathcal{S}|}\sum_{S\in\mathcal{S}} S
\end{equation}
be the projector onto the codespace.
For an input state $\rho$ supported on the codespace, consider a Pauli noise channel
\begin{equation}
    \mathcal{N}(\rho)=\sum_i w_i P_i \rho P_i^\dagger ,
    \qquad
    w_i\ge 0,\quad \sum_i w_i=1 ,
\end{equation}
where $P_i$ are Pauli operators.
After syndrome measurement and post-selection on the codespace, the survival probability is
\begin{equation}
    p_{\rm success}
    =
    \Tr\!\left[\Pi \mathcal{N}(\rho)\right],
\end{equation}
and the normalized post-selected state is
\begin{equation}
    \rho_{\rm QEDC}
    =
    \frac{\Pi \mathcal{N}(\rho)\Pi}
    {\Tr\!\left[\Pi \mathcal{N}(\rho)\right]} .
\end{equation}
Thus, QEDC suppresses noise by conditioning on successful syndrome outcomes, rather than by implementing an explicit recovery map.

The effectiveness of this procedure is controlled by the code distance $d$.
For a distance-$d$ code, QEC can correct arbitrary errors of weight up to $\lfloor (d-1)/2 \rfloor$, whereas QEDC can detect all errors of weight strictly smaller than $d$, provided that these errors take the state out of the codespace rather than acting as logical operators.
The price is statistical: post-selection reduces the number of accepted shots by a factor set by $p_{\rm suc}$, and therefore increases the sampling cost.
The most common QEDC is the Iceberg code, a high-rate quantum error-detecting code designed to protect expressive circuits with modest qubit overhead~\cite{20}.
Its role is not to implement full fault-tolerant correction, but to remove a large class of low-weight faults through stabilizer checks and post-selection.

Consequently, PEC mitigates noise through a quasiprobability decomposition of the inverse noise channel, whereas QEDC suppresses errors by post-selecting on trivial-syndrome outcomes and discarding the remaining trajectories. Both schemes incur a sampling overhead as the price of noise suppression. PEC is in principle able to yield an unbiased estimator of the ideal expectation value but at a substantially larger sampling cost, whereas QEDC mitigates only those errors that trigger a nontrivial syndrome and is therefore blind to the undetectable component, yet incurs a significantly smaller overhead. A curious question therefore arises: does an intermediate regime exist that interpolates between the two, suppressing the residual bias more aggressively than QEDC alone while keeping the sampling overhead well below that of full PEC?

In what follows, we separate the ideal action of stabilizer post-selection from the physical cost of measuring the stabilizers. This distinction is important for codes with high-weight checks, where the extraction circuit itself can dominate the sampling cost.

\section{The Feedback-Free QED+PEC Protocol}
\label{sec:protocol}

We propose a hybrid approach that integrates quantum error detection codes (QEDC) with probabilistic error cancellation (PEC). The system is first encoded into a QEDC, and all subsequent computations are performed using logical gates within the code subspace. In our protocol, we restrict to Clifford logical gates, which ensures that propagated Pauli errors remain Pauli operators and can be efficiently tracked classically (see Sec.~\ref{app:k_order_classical_complexity}).

\subsection{Noise model}
\label{sec:noise_model}

We adopt a circuit-level independent sparse Pauli noise model, in which noise channels are applied locally depending on the specific operations performed on each qubit during a given physical gate layer. Denoting the ideal $l$th gate layer as $\mathcal{U}_l$, the noisy gate is
\begin{equation}
\tilde{\mathcal{U}}_l = \mathcal{U}_l \circ N_l\,,
\label{eq:noisy_gate}
\end{equation}
where the layer noise channel $N_l$ is the tensor product of local noise channels acting on individual qubits or pairs of qubits.

Specifically, if a qubit undergoes a single-qubit gate or remains idle during the $l$th layer, it experiences a single-qubit Pauli error channel:
\begin{equation}
\begin{split}
N_{\text{1Q}}(\rho) = &[(1-w_{1x})id+w_{1x}\mathcal{X})]\circ [(1-w_{1y})id+w_{1y}\mathcal{Y}]\circ \\
&[(1-w_{1z})id+w_{1z}\mathcal{Z}](\rho)
\end{split}
\end{equation}

If a pair of qubits undergoes a two-qubit gate, they jointly experience a two-qubit Pauli error channel:
\begin{equation}
N_{\text{2Q}}(\rho) = \prod_{P \in \{I, X, Y, Z\}^{\otimes 2} \setminus \{I \otimes I\}}[(1-w_{2P})id+w_{2P}\mathcal{P}](\rho)
\end{equation}

Consequently, the noise channel $N_l$ for the entire layer factorizes as a product of independent Pauli channels, one for each local noise source active in that layer. Let $\mathcal{K}_l$ denote the set of local Pauli error operators in the $l$th layer. The exact noise channel is
\begin{equation}
N_l(\rho) = \prod_{i \in \mathcal{K}_l} [(1 - w_i)id + w_i\, \mathcal{P}_i ](\rho)
\label{eq:noise_channel}
\end{equation}
where $w_i$ is the error probability associated with the Pauli operator $P_i$, and the product is understood as a composition of superoperators. This Pauli-channel assumption is also compatible with noise-tailoring techniques such as randomized compiling, which tailor coherent errors into effective stochastic Pauli noise~\cite{wallman2016noise}.

In particular, when $w_{1x}=w_{1y}=w_{1z}$, $N_{\text{1Q}}$ reduces to single-qubit depolarizing noise; choosing all nonidentity two-qubit weights equal gives the analogous two-qubit depolarizing channel. We use this depolarizing specialization in the numerical demonstration below.

Unless explicitly stated otherwise, this noise model is applied only to the logical-gate blocks; the stabilizer measurements used for post-selection are treated as ideal projective measurements. We separately test readout-only syndrome flips and a circuit-level noisy GHZ-assisted extraction circuit in Sec.~\ref{sec:syndrome_noise_checks}.

\subsection{Protocol overview}
\label{sec:protocol_overview}

After every $T$ layers of noisy logical gates, an error detection (QED) step is performed by measuring all stabilizers of the QEDC. In the main construction this measurement is modeled as an ideal stabilizer projection; noisy readout and noisy extraction circuits are considered separately in Sec.~\ref{sec:syndrome_noise_checks}. Quantum trajectories that yield nontrivial syndromes---indicating that the state has leaked out of the code subspace---are discarded via post-selection. However, QED alone does not eliminate all noise: certain error processes preserve the code subspace and manifest as residual logical noise after post-selection. To cancel this residual noise, we apply a PEC-based error mitigation step on the logical subspace. This QED+PEC cycle is then repeated for the remainder of the circuit (see Fig.~\ref{fig:QED_PEC_circuit}).

\begin{figure*}[t]
    \centering
    \includegraphics[width=\textwidth]{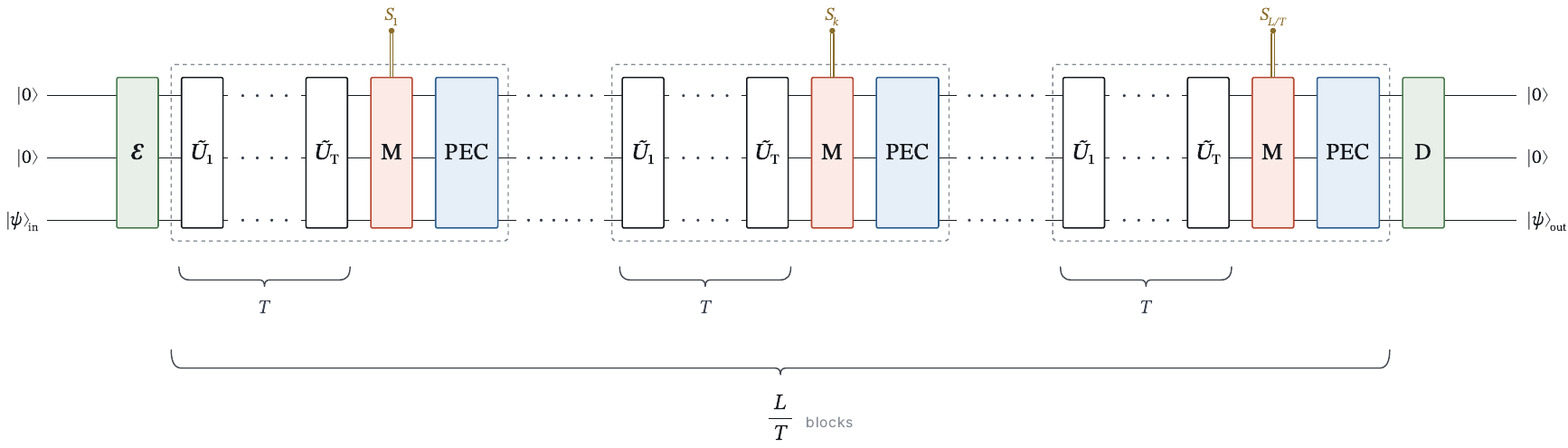}
    \caption{Quantum circuit implementing the QED+PEC protocol. After every $T$ layers of logical gates, stabilizer measurements perform error detection. Trajectories passing QED then undergo PEC to cancel residual logical noise.}
    \label{fig:QED_PEC_circuit}
\end{figure*}

This approach combines the strengths of quantum error correction and quantum error mitigation while circumventing key limitations of each. It exploits the entanglement structure of the QEDC subspace to suppress local noise, but does not require active error correction based on mid-circuit measurement feedback---a significant hardware challenge. Instead, error correction is effectively realized through PEC sampling. Meanwhile, it inherits the resource trade-off of QEM: exchanging temporal resources (sampling overhead) for spatial resources (physical qubits). The complete procedure is summarized in Algorithm~\ref{alg:edpec}. We note that the protocol takes the noise weights $\{w_i\}$ as known inputs; in practice, these are obtained from a prior noise characterization step, such as the sparse Pauli--Lindblad noise learning procedure demonstrated in Ref.~\cite{IBM2022}.

\begingroup
\small
\refstepcounter{algorithm}\label{alg:edpec}
\noindent
\textbf{Algorithm \thealgorithm.} $K$th-Order Feedback-Free QED+PEC Protocol
\medskip
\begin{algorithmic}[1]
\Require $n$-qubit QEDC with stabilizers $\{S_1,\ldots,S_s\}$ and projector $\Pi$; $L$-layer logical Clifford circuit $\mathcal U_1,\ldots,\mathcal U_L$; independent Bernoulli Pauli fault locations with weights $\{w_i\}$; detection interval $T$; perturbative order $K$; observable $O$; number of attempted samples $N_{\mathrm{shots}}$
\Ensure estimator $\hat{o}$ of $\langle O\rangle_{\mathrm{ideal}}$
\State $M \gets L/T$ \Comment{number of QED+PEC cycles}

\Statex \textit{// --- Phase 1: Classical pre-computation of $K$th-order PEC tables ---}
\For{$m=1$ \textbf{to} $M$}
    \State $\mathcal F_m \gets$ all independent Pauli fault locations in block $m$
    \For{each elementary fault $i\in\mathcal F_m$ occurring after physical layer $\ell_i$}
        \State Propagate to the end of the block:
        \Statex \hspace{1.5em}$\tilde P_i^{(m)} \gets U_{mcT}\cdots U_{\ell_i}\,P_i\,U_{\ell_i}^{\dagger}\cdots U_{mcT}^{\dagger}$
    \EndFor

    \State $\bar N_{K,m}\gets 0$; \quad $P_{K,\mathrm{success}}^{(m)}\gets 0$
    \For{each subset $I\subseteq\mathcal F_m$ with $|I|\le K$}
        \State $w_I\gets\prod_{i\in I}w_i$; \quad $\tilde P_I^{(m)}\gets\prod_{i\in I}\tilde P_i^{(m)}$ \Comment{$w_\varnothing=1$, $\tilde P_\varnothing=I$}
        \State $a_{I,m}^{(K)}\gets w_I\displaystyle\sum_{\ell=0}^{K-|I|}(-1)^\ell
        \sum_{\substack{J\subseteq\mathcal F_m\setminus I\\ |J|=\ell}}\prod_{j\in J}w_j$
        \If{$[\tilde P_I^{(m)},S_a]=0$ for all $a=1,\ldots,s$}
            \State $\mathcal P_I^{(m)}(\cdot)\gets\tilde P_I^{(m)}(\cdot)\tilde P_I^{(m)\dagger}$
            \State $\bar N_{K,m}\gets\bar N_{K,m}+a_{I,m}^{(K)}\mathcal P_I^{(m)}$
            \State $P_{K,\mathrm{success}}^{(m)}\gets P_{K,\mathrm{success}}^{(m)}+a_{I,m}^{(K)}$
        \EndIf
    \EndFor

    \State $R_{K,m}\gets \operatorname{Trunc}_{\le K}\!\left[\bar N_{K,m}/P_{K,\mathrm{success}}^{(m)}\right]-\operatorname{id}$
    \State $\displaystyle \sum_P c_{P,m}^{(K)}\mathcal P \gets
    \operatorname{Collect}_{P}\operatorname{Trunc}_{\le K}\!\left[\sum_{r=0}^{K}(-1)^r R_{K,m}^{\circ r}\right]$
    \State $\gamma_m^{(K)}\gets\sum_{P}|c_{P,m}^{(K)}|$
    \State $q_{P,m}^{(K)}\gets |c_{P,m}^{(K)}|/\gamma_m^{(K)}$ for each $P$ with $c_{P,m}^{(K)}\ne0$
    \State Store $\mathsf T_m^{(K)}\gets\{(P,q_{P,m}^{(K)},\operatorname{sgn}(c_{P,m}^{(K)})):\ c_{P,m}^{(K)}\ne0\}$
\EndFor

\Statex \textit{// --- Phase 2: Quantum execution, post-selection, and $K$th-order PEC sampling ---}
\State $\mathcal A\gets\varnothing$ \Comment{accepted trajectories}
\For{$k=1$ \textbf{to} $N_{\mathrm{shots}}$}
    \State Encode into logical initial state $\rho=\Pi\rho\Pi$
    \State $\textit{sign}_k\gets +1$; \quad $\Gamma_k\gets 1$; \quad $\textit{accepted}\gets\mathrm{true}$
    \For{$m=1$ \textbf{to} $M$}
        \State Apply $T$ layers of noisy logical gates
        \Statex \hspace{1.5em}$\tilde{\mathcal U}_{(m-1)T+1},\ldots,\tilde{\mathcal U}_{mT}$
        \State Measure all stabilizers $S_1,\ldots,S_s$ \Comment{QEDC post-selection}
        \If{any syndrome $\ne +1$}
            \State $\textit{accepted}\gets\mathrm{false}$; \quad \textbf{break}
        \EndIf
        \State Sample $(P,q,s)$ from $\mathsf T_m^{(K)}$ with probability $q=q_{P,m}^{(K)}$
        \State Apply the Pauli channel $\mathcal P(\cdot)=P(\cdot)P^\dagger$
        \State $\textit{sign}_k\gets\textit{sign}_k\times s$; \quad $\Gamma_k\gets\Gamma_k\times\gamma_m^{(K)}$
    \EndFor
    \If{$\textit{accepted}$}
        \State Measure observable $O$ and obtain outcome $x_k$
        \State Record $o_k\gets \Gamma_k\,\textit{sign}_k\,x_k$; \quad $\mathcal A\gets\mathcal A\cup\{k\}$
    \EndIf
\EndFor
\State \Return $\displaystyle \hat{o}=\frac{1}{|\mathcal A|}\sum_{k\in\mathcal A}o_k$
\end{algorithmic}
\endgroup

\subsubsection{Error detection step}
\label{sec:ed_step}

Let $\Pi$ denote the projector onto the code subspace of the QEDC. The logical initial state satisfies $\rho = \Pi \rho \Pi$, and the ideal logical gates preserve the subspace: $\mathcal{U}_T \circ \cdots \circ \mathcal{U}_1(\rho) \in \Pi$. In the presence of noise, the state may partially leak out of the subspace. After $T$ layers of noisy gates $\tilde{\mathcal{U}}_l = \mathcal{U}_l \circ N_l$, the QED step projects back onto the code subspace:
\begin{equation}
\begin{aligned}
&\Pi \circ \tilde{\mathcal{U}}_T \circ \cdots
    \circ \tilde{\mathcal{U}}_1(\rho)
    = \Pi \circ N \circ \mathcal{U}(\rho)\\
&\quad
    = p_{\mathrm{success}}\,
    N_{\mathrm{reduced}} \circ \Pi \circ \mathcal{U}(\rho)\\
&\quad
    = p_{\mathrm{success}}\,
    N_{\mathrm{reduced}} \circ \mathcal{U}(\rho)
\end{aligned}
\label{eq:ed_projection}
\end{equation}
where $p_{\mathrm{success}} \leq 1$ is the probability of passing QED. $\mathcal{U} = \mathcal{U}_T \circ \cdots \circ \mathcal{U}_1$ is the ideal unitary, and $N_{\mathrm{reduced}}$ is the residual noise channel that preserves the code subspace.

We remark that the post-selection in the error detection stage of Algorithm~\ref{alg:edpec} need not be implemented as a real-time decision during circuit execution. An equivalent strategy is to run all quantum trajectories to completion---including those that would fail QED---and filter in classical post-processing by retaining only the trajectories whose syndrome measurements all returned $+1$. The two approaches yield identical expectation values; the choice between them is dictated by experimental convenience and hardware constraints.

At the formal level, a noisy syndrome-extraction circuit can be absorbed into the same perturbative framework by adding its fault locations to the block and by replacing the ideal codespace projector with an accepted measurement instrument. At fixed perturbative order $K$, only fault subsets of size at most $K$ are enumerated, so this extension does not increase the preprocessing cost exponentially. It can, however, substantially reduce the observed acceptance probability and increase the total sampling cost. For high-weight stabilizers such as the Iceberg checks $X^{\otimes n}$ and $Z^{\otimes n}$, this extraction-induced cost can dominate the PEC saving, as shown in Sec.~\ref{sec:syndrome_noise_checks}.

In a readout-only syndrome-flip model, a false rejection can occur at first order in the readout error probability and only reduces the number of accepted shots. False acceptance of a detectable single physical fault, in contrast, requires both the physical fault and a readout flip that masks its syndrome, and is therefore second order in the joint fault expansion. This explains why readout-only errors mainly affect the observed post-selection denominator in the first-order simulations below.

\subsubsection{PEC step: \texorpdfstring{$K$th-order}{Kth-order} inverse channel}
\label{sec:high_order}
To cancel the residual noise $N_{\mathrm{reduced}}$, we construct an approximate inverse channel using PEC.

Consider one QED+PEC block whose noise-free logical action is
\begin{equation}
    \mathcal U := \mathcal U_T\circ\cdots\circ \mathcal U_1 ,
    \qquad
    \rho' := \mathcal U(\rho),
    \qquad
    \rho=\Pi\rho\Pi .
\end{equation}
Under the assumption of independent sparse Pauli fault model, the index set $\mathcal K_T$ enumerates independent physical Pauli fault locations during the block, each occurring with probability $w_i\in[0,1)$. After propagating every physical Pauli fault to the end of the block, denote the propagated Pauli by $\tilde P_i$ and define
\begin{equation}
    \mathcal P_i(\cdot):=\tilde P_i(\cdot)\tilde P_i^\dagger,
    \qquad
    W:=\sum_{i\in\mathcal K_T} w_i .
    \label{eq:total_noise}
\end{equation}
For a subset $I\subseteq\mathcal K_T$, define
\begin{equation}
\begin{split}
    w_I:=\prod_{i\in I}w_i,
    \qquad
    &\tilde P_I:=\prod_{i\in I}\tilde P_i,
    \qquad
    \mathcal P_I(\cdot):=\tilde P_I(\cdot)\tilde P_I^\dagger\\
    &\pi_I:=w_I\prod_{j\in\mathcal K_T\setminus I}(1-w_j)
\end{split}
\end{equation}
with $w_\varnothing=1$ and $\mathcal P_\varnothing=\operatorname{id}$. The propagated Pauli noise channel acting on $\rho'$ then factorizes as
\begin{equation}
    N
    =
    \prod_{i\in\mathcal K_T}\!\left[(1-w_i)\operatorname{id}+w_i\mathcal P_i\right]
    =
    \sum_{I\subseteq\mathcal K_T}
    \pi_I\mathcal P_I .
\end{equation}
Let $\operatorname{Trunc}_{\le K}$ denote truncation to total degree at most $K$ in the weights $\{w_i\}$. For $|I|\le K$, define the $K$th-order branch coefficient
\begin{equation}
\begin{split}
    a_I^{(K)}
    &:=
    \operatorname{Trunc}_{\le K}
    \left[
    w_I\prod_{j\in\mathcal K_T\setminus I}(1-w_j)
    \right]\\
    &=
    w_I
    \sum_{\ell=0}^{K-|I|}
    (-1)^\ell
    \sum_{\substack{
        J\subseteq\mathcal K_T\setminus I\\
        |J|=\ell
    }}
    w_J
\end{split}
\end{equation}
Then the $K$th-order noise channel is
\begin{equation}
    N_K
    =
    \sum_{\substack{
        I\subseteq\mathcal K_T\\
        |I|\le K
    }}
    a_I^{(K)}\mathcal P_I
    =
    N+O(W^{K+1}) .
\end{equation}

Define the QEDC-pass fault set, which is independent of the truncation order, and its $K$th-order subset
\begin{equation}
\begin{split}
    \mathcal K_{\mathrm{pass}}
    &:=
    \left\{
    I\subseteq\mathcal K_T:
    [\tilde P_I,S_a]=0,\ \forall a
    \right\}\\
    &\mathcal K_{\mathrm{success}}^{(\le K)}
    :=
    \mathcal K_{\mathrm{pass}}
    \cap
    \{I:|I|\le K\}
\end{split}
\end{equation}
Whether a fault product $\tilde P_I$ survives the post-selection is purely a commutation condition with the stabilizer generators and does not depend on $K$. Concretely, for $\rho'=\Pi\rho'\Pi$,
\begin{equation}
    \Pi\mathcal P_I(\rho')\Pi
    =
    \begin{cases}
    \mathcal P_I(\rho'), & I\in \mathcal K_{\mathrm{pass}},\\
    0, & I\notin \mathcal K_{\mathrm{pass}} .
    \end{cases}
\end{equation}
The exact all-order post-selected residual channel is
\begin{equation}
\begin{split}
    N_{\mathrm{reduce}}^{\mathrm{exact}}
    &:=
    \frac{1}{p_{\mathrm{success}}}
    \sum_{I\in\mathcal K_{\mathrm{pass}}}
    \pi_I\mathcal P_I\\
    p_{\mathrm{success}}
    &:=
    \operatorname{Tr}\Pi N(\rho')\Pi
    =
    \sum_{I\in\mathcal K_{\mathrm{pass}}}
    \pi_I
\end{split}
    \label{eq:exact_hardware_reduced_main}
\end{equation}
This is the channel that must be mitigated. It contains accepted branches of
every fault order, whereas the $K$th-order truncation below restricts the
post-selected sum to $I\in\mathcal K_{\mathrm{success}}^{(\le K)}$.

The unnormalized post-selected $K$th-order channel is therefore
\begin{equation}
    \bar N_K(\rho')
    :=
    \Pi N_K(\rho')\Pi
    =
    \sum_{I\in\mathcal K_{\mathrm{success}}^{(\le K)}}
    a_I^{(K)}\mathcal P_I(\rho') .
\end{equation}
The $K$th-order approximation to the survival probability is
\begin{equation}
    P_{K,\mathrm{success}}
    :=
    \operatorname{Tr}\bar N_K(\rho')
    =
    \sum_{I\in\mathcal K_{\mathrm{success}}^{(\le K)}}
    a_I^{(K)} .
\end{equation}
The exact post-selection rate $p_{\mathrm{success}}$ is approximated by $P_{K,\mathrm{success}}$ with error $p_{\mathrm{success}}-P_{K,\mathrm{success}}=O(W^{K+1})$.
The normalized surviving channel is
\begin{equation}
    N_{K,\mathrm{reduce}}
    :=
    \frac{\bar N_K}{P_{K,\mathrm{success}}}
    =
    \frac{
    \sum_{I\in\mathcal K_{\mathrm{success}}^{(\le K)}}
    a_I^{(K)}\mathcal P_I
    }{
    P_{K,\mathrm{success}}
    } .
\end{equation}
Equivalently,
\begin{equation}
    N_{K,\mathrm{reduce}}
    =
    \operatorname{id}
    +
    R_K
    +
    O(W^{K+1}),
\end{equation}
where
\begin{equation}
    R_K
    :=
    \operatorname{Trunc}_{\le K}
    \left[
    \frac{
    \sum_{I\in\mathcal K_{\mathrm{success}}^{(\le K)}}
    a_I^{(K)}
    \mathcal P_I
    }{
    P_{K,\mathrm{success}}
    }
    \right]
    -\operatorname{id}.
\end{equation}
Since $R_K=O(W)$, the $K$th-order inverse channel is given by the truncated geometric inverse
\begin{equation}
    N_{K,\mathrm{reduce}}^{-1}
    :=
    \operatorname{Trunc}_{\le K}
    \left[
    \sum_{r=0}^{K}
    (-1)^r R_K^{\circ r}
    \right],
\end{equation}
where
\begin{equation}
    R_K^{\circ 0}:=\operatorname{id},
    \qquad
    R_K^{\circ r}:=
    \underbrace{R_K\circ R_K\circ\cdots\circ R_K}_{r\ \mathrm{times}}
    \quad (r\ge 1).
\end{equation}
Therefore,
\begin{equation}
    N_{K,\mathrm{reduce}}^{-1}
    \circ
    N_{K,\mathrm{reduce}}
    =
    \operatorname{id}
    +
    O(W^{K+1}).
\end{equation}

Expanding $N_{K,\mathrm{reduce}}^{-1}$ and collecting identical Pauli channels, we write
\begin{equation}
    N_{K,\mathrm{reduce}}^{-1}
    =
    \sum_{P\in\mathsf P_n}
    c_P^{(K)}\mathcal P
    +
    O(W^{K+1}),
\end{equation}
where $\mathsf P_n$ is the $n$-qubit Pauli group modulo phases, $\mathcal P(\cdot)=P(\cdot)P^\dagger$, and
\begin{equation}
    c_P^{(K)}
    :=
    [\mathcal P]\,
    \operatorname{Trunc}_{\le K}
    \left[
    \sum_{r=0}^{K}
    (-1)^r R_K^{\circ r}
    \right].
\end{equation}
Here $[\mathcal P](\cdot)$ denotes the coefficient of the Pauli channel $\mathcal P$ after collecting identical Pauli channels.

The $K$th-order PEC norm is
\begin{equation}
    \gamma_K
    :=
    \sum_{P\in\mathsf P_n}
    \left|c_P^{(K)}\right|.
\end{equation}
Hence the PEC form of the inverse channel is
\begin{equation}
    N_{K,\mathrm{reduce}}^{-1}
    =
    \gamma_K
    \sum_{P\in\mathsf P_n}
    \operatorname{sgn}\!\left(c_P^{(K)}\right)
    \frac{\left|c_P^{(K)}\right|}{\gamma_K}
    \mathcal P
    +
    O(W^{K+1}).
\end{equation}
Thus, conditioned on passing QED, the $K$th-order PEC step samples the Pauli channel $\mathcal P$ with probability
\begin{equation}
    q_P^{(K)}
    =
    \frac{\left|c_P^{(K)}\right|}{\gamma_K},
\end{equation}
applies $\mathcal P$, and multiplies the measurement outcome by
\begin{equation}
    \gamma_K\,\operatorname{sgn}\!\left(c_P^{(K)}\right).
\end{equation}
The one-block sampling overhead can therefore be estimated, to $K$th order, as
\begin{equation}
    C_K
    =
    \frac{\gamma_K^2}{P_{K,\mathrm{success}}}
\end{equation}

\subsection{Approximation validity and error bound}
\label{sec:validity}

We now state the approximation guarantee for the order-$K$ QED+PEC
protocol in the form implemented by Algorithm~\ref{alg:edpec}.
The key distinction is the following. After a successful QED round, the
hardware produces the all-order post-selected channel determined by the exact
Bernoulli probabilities.  The PEC table, however, is built from the
coefficients $a_{I,m}^{(K)}$ obtained by keeping terms only through order
$K$, and from the corresponding truncated formal Neumann inverse.  The theorem below
directly bounds the discrepancy between these two objects.

For the error analysis below, we use the block-indexed version of the notation
from Sec.~\ref{sec:high_order}: all quantities for block $m$ carry a subscript
$m$. Thus $N_{m,\mathrm{reduce}}^{\mathrm{exact}}$ denotes the exact channel in
Eq.~\eqref{eq:exact_hardware_reduced_main} for block $m$, while the
order-$K$ reduced channel used for PEC compilation and the sampled
quasiprobability inverse are
\begin{equation}
\begin{aligned}
    \hat{N}_{m,\mathrm{reduce}}
    &:=
    \operatorname{id}+R_{K,m},\\
    \widehat N_{m,K}^{-1}
    &:=
    \sum_P c_{P,m}^{(K)}\mathcal P,\\
    \gamma_m^{(K)}
    &:=
    \sum_P |c_{P,m}^{(K)}|.
\end{aligned}
\end{equation}
The protocol implements $\widehat N_{m,K}^{-1}$, rather than the exact inverse
of $N_{m,\mathrm{reduce}}^{\mathrm{exact}}$.

\begin{theorem}[Error bound for the order-$K$ QED+PEC protocol]
\label{thm:k_order_error_bound}
Assume independent Bernoulli Pauli faults, Clifford logical blocks, ideal
stabilizer measurements, and a stabilizer QED code.  For each block $m$, let
$N_{m,\mathrm{reduce}}^{\mathrm{exact}}$ be the exact all-order post-selected
hardware residual channel defined as the block-indexed version of
Eq.~\eqref{eq:exact_hardware_reduced_main}, let
$\hat{N}_{m,\mathrm{reduce}}$ and $\widehat N_{m,K}^{-1}$ be the corresponding
order-$K$ objects from Sec.~\ref{sec:high_order},
and define the two block certificates
\begin{align}
    \eta_{m,K}
    &:=
    \left\|
        N_{m,\mathrm{reduce}}^{\mathrm{exact}}-\hat{N}_{m,\mathrm{reduce}}
    \right\|_\diamond,
    \label{eq:eta_certificate_main}\\
    \zeta_{m,K}
    &:=
    \left\|
        \widehat N_{m,K}^{-1}\circ\hat{N}_{m,\mathrm{reduce}}
        -\operatorname{id}
    \right\|_\diamond.
    \label{eq:zeta_certificate_main}
\end{align}
Equivalently, one may replace $\eta_{m,K}$ and $\zeta_{m,K}$ by any certified
upper bounds on the two norms above.  Then one QED+PEC block satisfies
\begin{equation}
\begin{split}
    \left\|
        \widehat N_{m,K}^{-1}\circ N_{m,\mathrm{reduce}}^{\mathrm{exact}}
        -\operatorname{id}
    \right\|_\diamond
    \le
    \epsilon_{m,K}\\
    \epsilon_{m,K}
    :=
    \zeta_{m,K}+\gamma_m^{(K)}\eta_{m,K}.
\end{split}
\label{eq:one_block_algorithmic_bound_main}
\end{equation}
For an $M$-block circuit, conditioned on all QED rounds accepting, the channel
implemented by the QED+PEC estimator satisfies
\begin{equation}
\begin{split}
    \left\|
        \mathcal C_{\mathrm{QED+PEC}}^{(K)}
        -\mathcal C_{\mathrm{ideal}}
    \right\|_\diamond
    \le
    \prod_{m=1}^{M}(1+\epsilon_{m,K})-1\\
    \le\exp\!\left(\sum_{m=1}^{M}\epsilon_{m,K}\right)-1.
    \end{split}
    \label{eq:end_to_end_algorithmic_bound_main}
\end{equation}
Consequently, for any observable $O$ and input state in the code space,
\begin{equation}
\begin{split}
    \left|
        \mathbb E[\hat o\mid\mathrm{all\ QED\ rounds\ accepted}]
        -\operatorname{Tr}\!\big[O\,\mathcal C_{\mathrm{ideal}}(\rho)\big]
    \right|\\
    \le
    \|O\|_\infty
    \left[
        \prod_{m=1}^{M}(1+\epsilon_{m,K})-1
    \right].
    \label{eq:observable_bias_algorithmic_bound_main}
\end{split}
\end{equation}
\end{theorem}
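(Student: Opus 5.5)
The proof has three parts: a one-block triangle-inequality bound, a telescoping bound over blocks, and a final conversion of the channel bound into an observable bias.

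\emph{One block.} Write
\begin{equation*}
\widehat N_{m,K}^{-1}\circ N_{m,\mathrm{reduce}}^{\mathrm{exact}}-\operatorname{id}
=\bigl(\widehat N_{m,K}^{-1}\circ\hat N_{m,\mathrm{reduce}}-\operatorname{id}\bigr)
+\widehat N_{m,K}^{-1}\circ\bigl(N_{m,\mathrm{reduce}}^{\mathrm{exact}}-\hat N_{m,\mathrm{reduce}}\bigr).
\end{equation*}
The first term has diamond norm $\zeta_{m,K}$ by definition. For the second, use submultiplicativity of the diamond norm under composition, which holds for arbitrary Hermiticity-preserving maps. Together with the bound $\|\widehat N_{m,K}^{-1}\|_\diamond\le\sum_P|c^{(K)}_{P,m}|\,\|\mathcal P\|_\diamond=\gamma^{(K)}_m$, this gives $\gamma_m^{(K)}\eta_{m,K}$. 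The bound $\|\mathcal P\|_\diamond=1$ holds because each $\mathcal P$ is a unitary channel. This yields $\epsilon_{m,K}$. Since the certificates enter only through upper bounds, replacing them by certified upper bounds is immediate.

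\emph{Composition.} First I identify the implemented conditional channel. With ideal stabilizer projections and Clifford blocks, conditioning on acceptance of round $m$ turns the physical block into $N^{\mathrm{exact}}_{m,\mathrm{reduce}}\circ\mathcal U^{(m)}$ on code-space inputs, by Eq.~\eqref{eq:ed_projection} and Eq.~\eqref{eq:exact_hardware_reduced_main}. This uses that the propagated Pauli branches either stay in or are projected out of the code space. The fact that the accepted branches' Paulis commute with $\Pi$ guarantees that the output again lies in the code space, so the next block's input satisfies the same hypothesis.

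I need one detail here: the exact normalization $p_{\mathrm{success}}^{(m)}$ is a state-independent constant. It is $\sum_{I\in\mathcal K_{\mathrm{pass}}}\pi_I$ for every code-space input, so conditioning factorizes blockwise. The PEC sampling with sign and weight $\gamma_m^{(K)}$ then implements $\widehat N_{m,K}^{-1}$ in expectation. Hence the estimator's conditional channel is $\mathcal C^{(K)}=\prod_m \widehat N_{m,K}^{-1}\circ N^{\mathrm{exact}}_{m,\mathrm{reduce}}\circ\mathcal U^{(m)}$, while the ideal channel is $\prod_m\mathcal U^{(m)}$.

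Set $A_m=\widehat N_{m,K}^{-1}\circ N^{\mathrm{exact}}_{m,\mathrm{reduce}}=\operatorname{id}+E_m$ with $\|E_m\|_\diamond\le\epsilon_{m,K}$. Expanding the product $\prod_m(\operatorname{id}+E_m)\circ\mathcal U^{(m)}$ gives the ideal term plus a sum over nonempty subsets of the $E_m$, interleaved with unitary channels of diamond norm $1$. Submultiplicativity and the triangle inequality then bound the difference by $\sum_{\emptyset\neq S}\prod_{m\in S}\epsilon_{m,K}=\prod_m(1+\epsilon_{m,K})-1$. An equivalent route is a telescoping induction using $\|A_m\|_\diamond\le1+\epsilon_{m,K}$. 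The exponential bound follows from $1+x\le e^x$.

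\emph{Observable.} The conditional expectation of $\hat o$ equals $\operatorname{Tr}[O\,\mathcal C^{(K)}(\rho)]$. This rests on two facts: the estimator is a ratio over accepted shots whose expectation is the conditional mean, and the PEC factors reproduce the linear map. Then the difference is bounded by $\|O\|_\infty\|(\mathcal C^{(K)}-\mathcal C_{\mathrm{ideal}})(\rho)\|_1$, and the trace norm is at most the diamond norm.

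The main obstacle I expect is the composition step. It requires justifying carefully that conditioning on all rounds accepting yields exactly the product of the normalized per-block channels. This hinges on the acceptance probabilities being input-independent on the code space, which holds for Pauli faults under ideal projection. It also requires noting that submultiplicativity applies to the non-CP maps $\widehat N^{-1}_{m,K}$.
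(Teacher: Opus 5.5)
Your proposal is correct and follows essentially the same route as the paper. It uses the identical one-block splitting bounded by $\zeta_{m,K}+\gamma_m^{(K)}\eta_{m,K}$ via submultiplicativity and $\|\widehat N_{m,K}^{-1}\|_\diamond\le\gamma_m^{(K)}$, the same nonempty-subset expansion giving $\prod_m(1+\epsilon_{m,K})-1$ (your interleaved unitaries play the role of the paper's conjugated errors $\widetilde\Delta_{m,K}$), and the same trace-norm/diamond-norm step for the observable; your explicit argument that the acceptance probability is input-independent on the code space, so that conditioning factorizes blockwise, is a step the paper leaves implicit and is sound. To close it fully, also note that the sampled PEC Paulis are products of accepted fault Paulis, hence commute with the stabilizers, so each block's input stays in the code space.
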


\medskip
\noindent\textbf{Proof sketch.}
For one block, we separate the error of inverting the
order-$K$ channel from the error caused by replacing the exact hardware
channel by that order-$K$ channel.  The exact identity is
\begin{equation}
\begin{split}
    \widehat N_{m,K}^{-1}\circ N_{m,\mathrm{reduce}}^{\mathrm{exact}}
    -\operatorname{id}
    =
    \big(\widehat N_{m,K}^{-1}\circ\hat{N}_{m,\mathrm{reduce}}
        -\operatorname{id}\big)\\
    +
    \widehat N_{m,K}^{-1}\circ
    \big(N_{m,\mathrm{reduce}}^{\mathrm{exact}}
        -\hat{N}_{m,\mathrm{reduce}}\big).
\end{split}
\end{equation}
Taking diamond norms and using the triangle inequality gives
\begin{equation}
\begin{split}
    \left\|
        \widehat N_{m,K}^{-1}\circ  N_{m,\mathrm{reduce}}^{\mathrm{exact}}
        -\operatorname{id}
    \right\|_\diamond
    \le
    \zeta_{m,K}
    +\\
    \left\|
        \widehat N_{m,K}^{-1}\circ
        \big(N_{m,\mathrm{reduce}}^{\mathrm{exact}}
            -\hat{N}_{m,\mathrm{reduce}}\big)
    \right\|_\diamond
\end{split}
\end{equation}
For the second term,
\begin{equation}
    \left\|
        \widehat N_{m,K}^{-1}\circ
        \big(N_{m,\mathrm{reduce}}^{\mathrm{exact}}
            -\hat{N}_{m,\mathrm{reduce}}\big)
    \right\|_\diamond
    \le
    \|\widehat N_{m,K}^{-1}\|_\diamond\,\eta_{m,K}.
\end{equation}
Expanding the PEC inverse as
$\widehat N_{m,K}^{-1}=\sum_P c_{P,m}^{(K)}\mathcal P$ and using
$\|\mathcal P\|_\diamond=1$ for each unitary Pauli channel gives
\begin{equation}
    \|\widehat N_{m,K}^{-1}\|_\diamond
    \le
    \sum_P |c_{P,m}^{(K)}|
    =
    \gamma_m^{(K)}.
\end{equation}
Thus the one-block error is at most
$\zeta_{m,K}+\gamma_m^{(K)}\eta_{m,K}$.

For an $M$-block circuit, write
$\widehat N_{m,K}^{-1}\circ N_{m,\mathrm{reduce}}^{\mathrm{exact}}
=\operatorname{id}+\Delta_{m,K}$ with
$\|\Delta_{m,K}\|_\diamond\le\epsilon_{m,K}$.  After conjugating these
errors by the intervening ideal Clifford blocks, write the resulting
errors as $\widetilde\Delta_{m,K}$.  Their diamond norms are unchanged.
Hence the error relative to the ideal circuit is bounded by
\begin{equation}
\begin{split}
    \left\|
        \prod_{m=1}^{M}
        \big(\operatorname{id}+\widetilde\Delta_{m,K}\big)
        -\operatorname{id}
    \right\|_\diamond
    &\le
    \sum_{\emptyset\ne S\subseteq\{1,\ldots,M\}}
    \prod_{m\in S}\epsilon_{m,K}\\
    &=
    \prod_{m=1}^{M}(1+\epsilon_{m,K})-1
\end{split}
\end{equation}
The exponential bound follows from $1+x\le e^x$.  The observable bound
follows from
$|\operatorname{Tr}[O(\Phi-\Psi)(\rho)]|
\le \|O\|_\infty\|\Phi-\Psi\|_\diamond$.

\begin{corollary}[Perturbative scaling]
\label{cor:perturbative_scaling}
Under the assumptions of Theorem~\ref{thm:k_order_error_bound}, write
$W_m:=\sum_{i\in\mathcal K_m}w_i$ and assume the success probability
$p_m$ stays bounded away from zero as $W_m\to0$.  Then
\begin{equation}
\begin{split}
    \eta_{m,K}=O(&W_m^{K+1}),
    \qquad
    \zeta_{m,K}=O(W_m^{K+1})\\
    &\gamma_m^{(K)}=1+O(W_m)
\end{split}
\label{eq:perturbative_three_estimates}
\end{equation}
so the one-block certificate satisfies $\epsilon_{m,K}=O(W_m^{K+1})$,
and whenever $\sum_m\epsilon_{m,K}\ll1$,
\begin{equation}
    \left\|
        \mathcal C_{\mathrm{QED+PEC}}^{(K)}
        -\mathcal C_{\mathrm{ideal}}
    \right\|_\diamond
    =O\!\left(\sum_{m=1}^{M}W_m^{K+1}\right).
    \label{eq:end_to_end_perturbative_scaling}
\end{equation}
The last estimate is obtained by substituting
$\epsilon_{m,K}=O(W_m^{K+1})$ into
Eq.~\eqref{eq:end_to_end_algorithmic_bound_main} and using
$\prod_m(1+\epsilon_{m,K})-1
=\sum_m\epsilon_{m,K}+O((\sum_m\epsilon_{m,K})^2)$.
\end{corollary}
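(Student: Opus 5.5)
The plan is to establish the three estimates in Eq.~\eqref{eq:perturbative_three_estimates} separately and then feed them into Theorem~\ref{thm:k_order_error_bound}. All objects involved are linear combinations of Pauli channels $\mathcal P$ with coefficients that are polynomials, or ratios of polynomials, in the weights $\{w_i\}_{i\in\mathcal K_m}$. Since $\|\mathcal P\|_\diamond=1$, every diamond norm is bounded by the $\ell_1$ norm of the coefficient vector. The work is therefore reduced to bounding sums of absolute values of coefficients, uniformly in the weights, by powers of $W_m$. The basic tool is the elementary estimate
\begin{equation*}
\sum_{|I|=k}w_I\le \frac{W_m^k}{k!},
\end{equation*}
together with its consequence that any polynomial tail of degree $\ge K+1$, built from the products $w_I$ with coefficients bounded by combinatorial constants independent of $|\mathcal K_m|$, is $O(W_m^{K+1})$.

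\textbf{Estimate for $\eta_{m,K}$.} Write
\[
N^{\mathrm{exact}}_{m,\mathrm{reduce}}=p_m^{-1}\sum_{I\in\mathcal K_{\mathrm{pass}}}\pi_I\mathcal P_I .
\]
Split this sum into the branches $|I|\le K$ and the branches $|I|>K$.
\begin{itemize}
\item For $|I|>K$, the terms sum in $\ell_1$ to at most $\sum_{|I|>K}w_I\le e^{W_m}-\sum_{k\le K}W_m^k/k!=O(W_m^{K+1})$.
\item For $|I|\le K$, one has $\pi_I-a_I^{(K)}=w_I\,\mathrm{Tail}_{>K-|I|}\prod_{j\notin I}(1-w_j)$. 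Summed over $I$, this gives $O(W_m^{K+1})$, by expanding the product and again using the $W^k/k!$ estimate.
\item The same splitting gives $p_m-P_{K,\mathrm{success}}=O(W_m^{K+1})$.
\end{itemize}
Since $p_m$ is bounded away from zero and $P_{K,\mathrm{success}}=p_m+O(W_m^{K+1})$, the normalizations $1/p_m$ and $1/P_{K,\mathrm{success}}$ differ by $O(W_m^{K+1})$. This gives $\|N^{\mathrm{exact}}-N_{K,\mathrm{reduce}}\|_\diamond=O(W_m^{K+1})$.

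It remains to compare $N_{K,\mathrm{reduce}}$ with $\hat N_{m,\mathrm{reduce}}=\operatorname{id}+R_{K,m}$, which is its degree-$K$ Taylor truncation. Because $P_{K,\mathrm{success}}=1+O(W_m)$, the expansion of $1/P_{K,\mathrm{success}}$ as a geometric series converges for small $W_m$. The tail of degree $>K$ of the ratio is then controlled in $\ell_1$ by $O(W_m^{K+1})$. A triangle inequality finishes the bound on $\eta_{m,K}$.

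\textbf{Estimates for $\gamma_m^{(K)}$ and $\zeta_{m,K}$.}
\begin{itemize}
\item \emph{Bound on $R_{K,m}$.} The $\ell_1$ norm of $R_{K,m}$ is $O(W_m)$, because every nonidentity coefficient carries at least one weight factor.
\item \emph{$\gamma_m^{(K)}$.} From this, $\sum_r\|R_{K,m}\|^r$ gives $\gamma_m^{(K)}\le 1+O(W_m)$. Also $\gamma_m^{(K)}\ge|\sum_P c_P|=1$ by trace preservation, so $\gamma_m^{(K)}=1+O(W_m)$.
\item \emph{$\zeta_{m,K}$.} Pauli channels commute, so everything lives in a commutative algebra of formal power series. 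There, $\big(\sum_{r\le K}(-1)^rR^r\big)(\operatorname{id}+R)=\operatorname{id}+(-1)^KR^{K+1}$ exactly. Replacing the geometric sum by its degree-$\le K$ truncation changes it only by terms of degree $\ge K+1$. Those terms, composed with $\operatorname{id}+R$, remain of degree $\ge K+1$, and their $\ell_1$ mass is $O(W_m^{K+1})$, since each is a product of polynomials with bounded $\ell_1$ mass in $W_m$. Hence $\zeta_{m,K}=O(W_m^{K+1})$.
\end{itemize}
Combining, $\epsilon_{m,K}=\zeta_{m,K}+\gamma_m^{(K)}\eta_{m,K}=O(W_m^{K+1})$. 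Substituting into Eq.~\eqref{eq:end_to_end_algorithmic_bound_main} and using $e^x-1=x+O(x^2)$ when $x=\sum_m\epsilon_{m,K}\ll1$ yields Eq.~\eqref{eq:end_to_end_perturbative_scaling}.

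\textbf{Main obstacle.} The hardest part is making the $O(W_m^{K+1})$ constants independent of the number of fault locations $|\mathcal K_m|$. A naive count over the $O(m^K)$ branches would introduce factors of $m$. I would handle every degree-$k$ tail through the symmetric-function bound $e_k(w)\le W^k/k!$ and through generating-function majorants, for example bounding $\prod_j(1+w_j)$ by $e^{W}$. Working with these majorants rather than counting branches individually keeps the constants dependent only on $K$ and on the lower bound for $p_m$.
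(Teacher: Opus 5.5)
Your proposal is correct. For $\gamma_m^{(K)}$ and $\zeta_{m,K}$ it is essentially the paper's argument: the truncated Neumann series leaves no monomials of degree $\le K$ in $\widehat N_{m,K}^{-1}\circ\hat N_{m,\mathrm{reduce}}-\operatorname{id}$, and $R_{K,m}=O(W_m)$. You add two useful things there: the explicit residual $(-1)^K R_{K,m}^{K+1}$ from telescoping (which holds for powers of a single $R$, so commutativity is not really needed), and the lower bound $\gamma_m^{(K)}\ge 1$ from trace preservation.

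For $\eta_{m,K}$ your route differs from the paper's. The paper first proves Lemma~\ref{lem:AK_is_exact_taylor_truncation}, namely $\hat N_{m,\mathrm{reduce}}=\operatorname{Trunc}_{\le K}[N^{\mathrm{exact}}_{m,\mathrm{reduce}}]$, using that the low-degree coefficients of $1/p$ depend only on those of $p$. It then rescales $w\mapsto tw$ and bounds $\eta_{m,K}$ by the integral Taylor remainder at $t=1$. It only asserts that the $(K+1)$st derivative is $O(W_m^{K+1})$. You never identify the two channels. Instead you pass through the intermediate channel $\bar N_K/P_{K,\mathrm{success}}$ and bound three explicit pieces with $e_k(w)\le W_m^k/k!$ and majorant series:
\begin{itemize}
\item the accepted branches with $|I|>K$;
\item the coefficient errors $\pi_I-a_I^{(K)}$, together with $p_m-P_{K,\mathrm{success}}$;
\item the degree-$>K$ tail of the ratio $\bar N_K/P_{K,\mathrm{success}}$.
\end{itemize}

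The paper's route is shorter and conceptually a single remainder estimate. Yours gives constants depending only on $K$ and the lower bound on $p_m$, not on $|\mathcal K_m|$. That uniformity is what the end-to-end estimate $O(\sum_m W_m^{K+1})$ over heterogeneous blocks implicitly needs, and the paper leaves it unaddressed. Making the paper's derivative step uniform would require the same kind of bound, e.g.\ $\sum_I|\tfrac{d^q}{dt^q}\pi_I(t)|\le 2^q q!\,e_q(w)$.

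One detail to tighten. $\widehat N_{m,K}^{-1}$ is the \emph{truncated} series, and dropping monomials is not an $\ell_1$ contraction at fixed $w$. So $\gamma_m^{(K)}\le\sum_r\|R_{K,m}\|_1^r$ does not follow directly, though it does follow from the majorant argument you already invoke. Likewise, the $\ell_1$ bound on $R_{K,m}$ must include its identity coefficient, which is also $O(W_m)$ because it has no constant term.
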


A simple intuitive criterion for choosing the truncation order is to take $K\le d-1$, where $d$ is the distance of the QEDC. QED eliminates fault products whose effective weight is lower than $d$. Thus, for single-qubit-level faults, QED alone suppresses the error rate up to $O(p^d)$. In a circuit-level noise model, however, fault propagation can generate high-weight fault operators with relatively low order in the physical noise strength. For PEC orders $K<d$, many retained fault products are still below the code distance and can therefore be filtered by QED, which reduces the PEC sampling overhead. In contrast, for $K\ge d$, the retained terms of order $O(W^K)$ contain a larger fraction of high-weight fault operators with weight at least $d$. These terms are not preferentially removed by a distance-$d$ detector, so QED provides less advantage in reducing the sample cost at these orders. Therefore, in the numerical section below, we focus on first-order truncation, which is the natural nontrivial choice for distance-two codes such as the Iceberg code.

\subsection{Sampling cost}
\label{sec:sampling_cost}

The total sampling cost per QED+PEC cycle consists of two contributions. The PEC variance overhead is characterized by the quasiprobability coefficient $\gamma = 1 + 2\tilde{W}$, requiring $\mathcal{O}(\gamma^2)$ samples for convergence~\cite{QEM_review}. The QED post-selection overhead contributes a factor of $1/p_{\mathrm{success}}$. The overall sampling complexity per cycle is therefore
\begin{equation}
C_{\mathrm{cycle}} = \mathcal{O}\!\left(\frac{\gamma^2}{p_{\mathrm{success}}}\right).
\label{eq:sample_cost}
\end{equation}

The key observation is that QED reduces the residual noise $\tilde{W}$ substantially compared to the bare noise $W$, because most error branches are filtered out by the stabilizer measurements. When stabilizer measurements are ideal or sufficiently cheap, the post-selection penalty $1/p_{\mathrm{success}}$ can be more than compensated by the reduction in PEC overhead $\gamma^2$. This trade-off is the central mechanism underlying the advantage of QED+PEC over pure PEC, and is analyzed quantitatively via a toy model in Sec.~\ref{sec:toy_model}. An information-theoretic perspective based on quantum Fisher information is given in Appendix~\ref{sec:qfi}, supporting the view that subspace encoding can reduce the fundamental sampling burden under local noise.

When syndrome readout or extraction is noisy, the relevant denominator is the observed acceptance probability $p_{\mathrm{success}}^{\mathrm{obs}}$, not the ideal acceptance probability. The per-cycle cost becomes
\begin{equation}
C_{\mathrm{cycle}}^{\mathrm{obs}} = \mathcal{O}\!\left(\frac{\gamma^2}{p_{\mathrm{success}}^{\mathrm{obs}}}\right).
\label{eq:sample_cost_obs}
\end{equation}
Noisy extraction can therefore reverse the ideal-measurement trend if the loss in $p_{\mathrm{success}}^{\mathrm{obs}}$ exceeds the PEC saving. More generally, the total cost can become U-shaped in the detection interval $T$, with a finite optimum.

\subsection{Classical complexity of the \texorpdfstring{$K$th-order}{Kth-order} expansion}
\label{app:k_order_classical_complexity}

We bound the per-block preprocessing cost of compiling the $K$th-order QED+PEC
table. Let
\[
    m = |\mathcal K_T| = O(cnT)
\]
be the number of elementary Pauli faults in one block, where $T$ is the QEDC
interval, $c$ is the physical-layer overhead per logical layer, and $n$ is the
number of physical qubits. We treat $K$ as a fixed constant throughout.

\begin{itemize}
    \item \textbf{Number of terms}: After collecting identical Pauli channels,
    the $K$th-order expansion contains one $\mathcal P_I$ per subset
    $I\subseteq\mathcal K_T$ with $|I|\le K$, so
    \[
        N_{\mathrm{terms}}^{(K)}
        \le
        \sum_{r=0}^{K}\binom{m}{r}
        = O(m^K)
        = O\!\left((cnT)^K\right).
    \]
    We never enumerate the full $4^n$ Pauli group; only fault-generated
    strings are stored.

    \item \textbf{Propagation}: Each elementary fault is propagated to the end
    of the block in $O(n)$ via symplectic updates, giving a one-shot cost
    $O(mn)=O(cn^2T)$. A $K$-fault product is then an XOR of at most $K$
    precomputed strings, costing $O(n)$ per enumerated term:
    \[
        Cost_{\mathrm{prop}}^{(K)}
        = N_{\mathrm{terms}}^{(K)}\cdot O(n)
        = O\!\left((cnT)^K n\right).
    \]

    \item \textbf{QEDC check}: Commutation with $s$ stabilizer generators
    costs $O(sn)$ per term:
    \[
        Cost_{\mathrm{check}}^{(K)}
        = O\!\left((cnT)^K s n\right).
    \]
    Iceberg $[[k+2,k,2]]$ codes have $s=2=O(1)$ (only $X^{\otimes(k+2)}$ and
    $Z^{\otimes(k+2)}$), so this reduces to $O\!\left((cnT)^K n\right)$.

    \item \textbf{Inverse channel}: Computing
    $N_{K,\mathrm{reduce}}^{-1}=\operatorname{Trunc}_{\le K}\sum_{r=0}^{K}(-1)^r R_K^{\circ r}$
    is sparse Pauli multiplication with order truncation. The normalization
    by $P_{K,\mathrm{success}}$ and the iterated compositions $R_K^{\circ r}$
    can produce repeated fault-label powers (e.g.\ $w_i^2 w_j$), so retained
    monomials are characterized by total polynomial degree, not by distinct
    labels: every retained monomial has total degree at most $K$ in the fault
    weights, equivalently at most $K$ fault-label occurrences counting
    multiplicity. The number of such monomials is bounded by
    $\binom{m+K}{K}=O(m^K)$, and each carries an $O(n)$ Pauli multiplication, giving
    \[
        Cost_{\mathrm{inv}}^{(K)}
        = O\!\left((cnT)^K n\right).
    \]

    \item \textbf{Sampling}: Computing $\gamma_K=\sum_P|c_P^{(K)}|$ and
    $q_P^{(K)}=|c_P^{(K)}|/\gamma_K$ is a single pass over the sparse table:
    \[
        Cost_{\mathrm{sample}}^{(K)}
        = O\!\left((cnT)^K\right),
    \]
    subdominant to the steps above.
\end{itemize}

Combining the dominant terms, the per-block cost for Iceberg codes is
\[
    Cost_{\mathrm{classical}}^{(K)}
    = O\!\left((cnT)^K n\right)
    = O\!\left(c^K n^{K+1}T^K\right),
\]
and the total cost over $L/T$ blocks is $(L/T)\cdot O\!\left((cnT)^K n\right)$.
Hence the compilation is efficient in
the fixed-order sense: polynomial in $n$, $T$, and $L$ for any fixed $K$.
This is \emph{not} an all-orders statement --- if $K$ scales with $m=O(cnT)$,
the enumeration approaches the full expansion and becomes exponential in $m$.
This polynomial statement also concerns classical preprocessing only. It does not imply that the physical sampling cost remains small: noisy extraction circuits can add many fault locations and can lower the acceptance probability substantially for global stabilizer checks.

\section{Numerical Results}
\label{sec:numerical}

\subsection{Simulation Setup}
\label{sec:simulation_setup}

We compare the QED+PEC protocol with standard PEC and standard QEDC on a concrete quantum computing task: preparing a logical Greenberger--Horne--Zeilinger (GHZ) state using the $[[n, n{-}2, 2]]$ \emph{Iceberg code}~\cite{20}. This code encodes $k = n - 2$ logical qubits into $n$ physical qubits with code distance $d = 2$, and is defined by two stabilizer generators,
\begin{equation}
S_1 = Z^{\otimes n}, \qquad S_2 = X^{\otimes n}.
\label{eq:iceberg_stabilizers}
\end{equation}
The code distance $d = 2$ ensures that every single-qubit Pauli error anticommutes with at least one stabilizer and is therefore detectable by QED. However, certain weight-two errors commute with both stabilizers and remain undetected---these constitute the residual logical noise that PEC is designed to cancel.

The $n - 2$ pairs of logical operators are
\begin{equation}
\bar{Z}_j = Z_0\, Z_{j+1}, \qquad
\bar{X}_j = X_1\, X_{j+1}, \qquad
j = 1, \ldots, n-2,
\label{eq:iceberg_logicals}
\end{equation}
where subscripts label physical qubits (0-indexed). One can verify that these operators commute with both stabilizers and satisfy the canonical anticommutation relations.

A key feature of the Iceberg code is that logical CNOTs admit a simple, low-depth implementation. The logical $\overline{\mathrm{CNOT}}_{ij}$ between logical qubits $i$ and $j$ decomposes into four physical CNOTs,
\begin{equation}
\begin{split}
&\overline{\mathrm{CNOT}}_{ij} =\\
&\mathrm{CNOT}_{0,1} \circ \mathrm{CNOT}_{i+1,j+1} \circ
\mathrm{CNOT}_{0,j+1} \circ \mathrm{CNOT}_{i+1,1}
\end{split}
\label{eq:logical_cnot}
\end{equation}
which act on disjoint qubit pairs and can therefore be executed in $c = 2$ parallel layers: $\mathrm{CNOT}_{0,1}$ and $\mathrm{CNOT}_{i+1,j+1}$ in the first, $\mathrm{CNOT}_{0,j+1}$ and $\mathrm{CNOT}_{i+1,1}$ in the second.

Under the circuit-level noise model of Sec.~\ref{sec:noise_model}, each physical gate layer introduces local depolarizing noise on every active qubit or qubit pair, as illustrated in Fig.~\ref{fig:noisy_circuit} for a five-qubit example.

\begingroup
\begin{center}
    \includegraphics[width=\columnwidth]{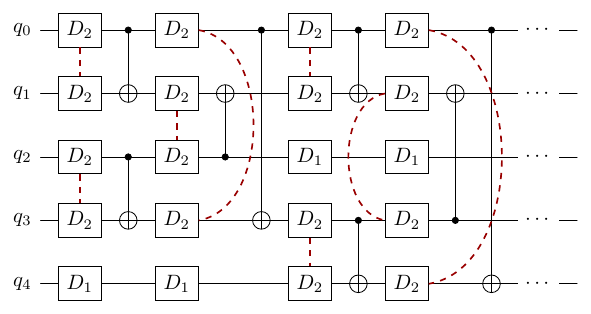}
    \refstepcounter{figure}
    \label{fig:noisy_circuit}
    \par\smallskip
    \parbox{0.95\columnwidth}{\small\textbf{\figurename~\thefigure.} Circuit-level noise model for a five-qubit segment of the Iceberg code. Before each physical gate layer, every qubit or qubit pair that participates in a two-qubit gate receives a two-qubit depolarizing channel $D_2$ (error rate $p_2$), while idle qubits receive a single-qubit depolarizing channel $D_1$ (error rate $p_1$). Red dashed lines connect correlated two-qubit noise sources to their corresponding gate pairs. The full layer noise channel $N_l$ is the tensor product of all such local channels, as defined in Eq.~\eqref{eq:noise_channel}.}
\end{center}
\endgroup

This task prepares the logical GHZ state $\overline{|\mathrm{GHZ}\rangle} = (\overline{|0\rangle}^{\otimes(n-2)} + \overline{|1\rangle}^{\otimes(n-2)})/\sqrt{2}$ by applying $n - 3$ logical CNOT gates sequentially to the initial state $\overline{|{+}\rangle}\,\overline{|0\rangle}^{\otimes(n-3)}$. The resulting ideal circuit for $n = 10$ is shown in Fig.~\ref{fig:ideal_circuit}.

\begingroup
\begin{center}
    \includegraphics[width=\columnwidth]{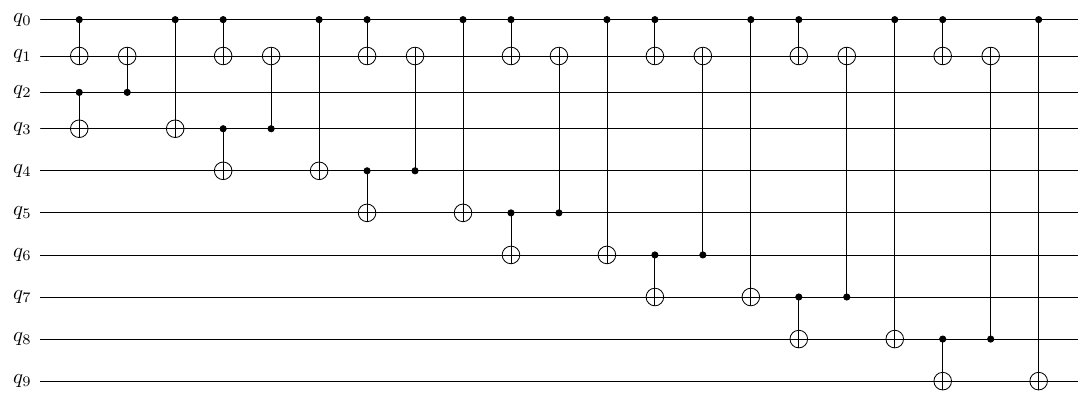}
    \refstepcounter{figure}
    \label{fig:ideal_circuit}
    \par\smallskip
    \parbox{0.95\columnwidth}{\small\textbf{\figurename~\thefigure.} Ideal quantum circuit for preparing a logical GHZ state on the $[[10, 8, 2]]$ Iceberg code. Each logical $\overline{\mathrm{CNOT}}_{ij}$ is compiled into four physical CNOT gates arranged in two parallel layers according to Eq.~\eqref{eq:logical_cnot}. The circuit applies $n - 3 = 7$ logical CNOT gates sequentially, producing the eight-qubit logical GHZ state. Physical qubits $q_0$ and $q_1$ participate in every logical CNOT compilation, while $q_2$ through $q_9$ carry the eight logical qubits.}
\end{center}
\endgroup

We adopt the gate-specific depolarizing noise model described in Sec.~\ref{sec:noise_model}, with a single-qubit error rate $p_1 = 10^{-4}$ and a two-qubit error rate $p_2 = 10^{-3}$. Stabilizer measurements are modeled as ideal projective measurements; this is a cheap-detection limit that isolates the QED+PEC mechanism. We test two departures from this assumption in Sec.~\ref{sec:syndrome_noise_checks}: a readout-only syndrome-flip model and a circuit-level noisy GHZ-assisted extraction of the Iceberg stabilizers. The term ``circuit-level noise'' in this benchmark refers to the noise on the logical-gate blocks; it does not include the gates required to extract the stabilizer syndromes unless explicitly stated.

The number of physical qubits ranges from $n = 10$ to $n = 200$, and the error detection interval $T$ (number of logical gates between consecutive QED steps) varies from 1 to 5. Throughout all simulations, we enforce the constraint $W < 0.5$, where $W = cnT\bar{p}$ is the total noise weight per QED+PEC cycle defined in Eq.~\eqref{eq:total_noise}. This ensures that all reported results fall within the validity regime of Theorem~\ref{thm:k_order_error_bound}, where the first-order approximation error scales as $\mathcal{O}(W^2)$.

As one of the baselines, we compare against standard PEC applied directly to $n{-}2$ physical qubits preparing the same GHZ state without any encoding. The unencoded circuit uses fewer qubits and has shallower depth, so this comparison is, if anything, favorable to the baseline. An eight-qubit instance of the pure-PEC circuit used for this comparison is shown in Fig.~\ref{fig:pure_pec_circuit}. The sampling cost for both protocols is computed analytically from the first-order noise expansion (Sec.~\ref{sec:pec_step}), avoiding Monte Carlo statistical error.

\begingroup
\begin{center}
    \centering
    \includegraphics[width=0.75\columnwidth]{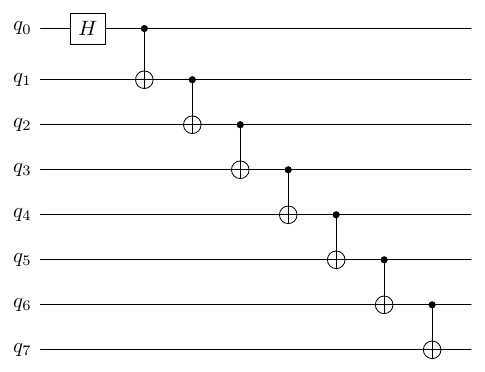}
    \refstepcounter{figure}
    \label{fig:pure_pec_circuit}
    \par\smallskip
    \parbox{0.95\columnwidth}{\small\textbf{\figurename~\thefigure.} Representative pure-PEC baseline circuit for direct GHZ-state preparation on eight unencoded qubits. The circuit starts from $\ket{0}^{\otimes 8}$, applies a Hadamard gate to $q_0$, and then applies the nearest-neighbor CNOT chain $\mathrm{CNOT}_{0,1}, \mathrm{CNOT}_{1,2}, \ldots, \mathrm{CNOT}_{6,7}$. In the numerical comparison, this same unencoded GHZ preparation pattern is scaled to $n_{\mathrm{pure}}=n-2$ qubits and corrected by standard PEC without error-detection post-selection.}
\end{center}
\endgroup

In what follows, we show that, under ideal stabilizer measurements, QED+PEC has two advantages over the single-strategy baselines. Compared with standard PEC applied directly to the unencoded circuit, QED+PEC reduces the sampling overhead by orders of magnitude (Sec.~\ref{sec:ghz_sampling}). Compared with quantum error detection alone---i.e., post-selection on the Iceberg-code stabilizers without PEC inversion---QED+PEC achieves a higher post-selected fidelity, and this advantage grows with system size (Sec.~\ref{sec:fidelity_comparison}).

\subsection{First-order specialization}
\label{sec:pec_step}

For all numerical experiments below, we specialize the order-$K$ protocol of
Sec.~\ref{sec:high_order} to $K=1$.
This specialization should be distinguished from the general accepted-branch
set $\mathcal K_{\mathrm{success}}^{(\le K)}$ used in the arbitrary-order
construction: here we keep only the no-fault branch and accepted single-fault
branches.

Let $\mathcal K_T$ be the set of independent Pauli fault locations in one
QED+PEC block, and let
\begin{equation}
    W:=\sum_{i\in\mathcal K_T}w_i .
\end{equation}
After propagating each physical Pauli fault to the end of the Clifford block,
write the propagated Pauli as $\tilde P_i$ and the associated Pauli channel as
\[
    \mathcal P_i(\cdot)=\tilde P_i(\cdot)\tilde P_i^\dagger .
\]
Keeping terms through first order in the independent Bernoulli fault
probabilities gives
\begin{equation}
\begin{split}
    \mathcal N_{1}(\rho')
    &=
    (1-W)\rho'
    +
    \sum_{i\in\mathcal K_T}w_i\,\mathcal P_i(\rho')
    +
    O(W^2),\\
    &\hspace{2.2em}
    \rho'=\mathcal U(\rho).
\end{split}
    \label{eq:first_order_pre_qed}
\end{equation}

A single-fault branch survives the QED projection precisely when the propagated
fault commutes with every stabilizer generator.  We therefore define the
accepted single-fault set
\begin{equation}
\begin{aligned}
    \mathcal K_{\mathrm{success}}^{(1)}
    &:=
    \left\{
        i\in\mathcal K_T:\,
        [\tilde P_i,S_a]=0\right.\\
    &\qquad\left.
        \text{for all stabilizer generators }S_a
    \right\}.
\end{aligned}
    \label{eq:first_order_success_set}
\end{equation}
Rejected single faults are removed by post-selection.  Thus the first-order
success probability is
\begin{equation}
    p_{\mathrm{success}}^{(1)}
    =
    1-\sum_{i\notin\mathcal K_{\mathrm{success}}^{(1)}}w_i
    +
    O(W^2).
    \label{eq:first_order_success_prob}
\end{equation}
Conditioned on success and dropping $O(W^2)$ terms, the normalized residual
logical channel is
\begin{equation}
\begin{split}
    \mathcal{N}^{(1)}_{\mathrm{reduced}}(\rho')
    =
    \left(1-\tilde W\right)\rho'
    +
    \sum_{i\in\mathcal K_{\mathrm{success}}^{(1)}}
        \tilde w_i\,\tilde P_i\rho'\tilde P_i^\dagger \\
    \tilde w_i:=\frac{w_i}{p_{\mathrm{success}}^{(1)}},
    \qquad
    \tilde W:=\sum_{i\in\mathcal K_{\mathrm{success}}^{(1)}}\tilde w_i .
    \label{eq:N1}
\end{split}
\end{equation}
This equation makes explicit the mechanism used throughout the numerical
section: QED removes the rejected first-order branches, and PEC only inverts
the residual accepted logical noise.

Equivalently,
\begin{equation}
    \mathcal{N}^{(1)}_{\mathrm{reduced}}
    =
    \operatorname{id}+R_1,
    \qquad
    R_1
    =
    \sum_{i\in\mathcal K_{\mathrm{success}}^{(1)}}
        \tilde w_i(\mathcal P_i-\operatorname{id}) .
    \label{eq:first_order_R1}
\end{equation}
The first-order inverse is the degree-one Neumann inverse,
\begin{equation}
    \hat{\mathcal N}_1^{-1}
    =
    \operatorname{id}-R_1
    +
    O(W^2).
\end{equation}
Acting on a state, this is
\begin{equation}
    \hat{\mathcal{N}}_1^{-1}(\rho')
    =
    \left(1+\tilde W\right)\rho'
    -
    \sum_{i\in\mathcal K_{\mathrm{success}}^{(1)}}
        \tilde w_i\,\tilde P_i^\dagger\rho'\tilde P_i .
    \label{eq:N1_inv}
\end{equation}
Therefore
\begin{equation}
    \hat{\mathcal N}_1^{-1}
    \circ
    \mathcal{N}^{(1)}_{\mathrm{reduced}}
    =
    \operatorname{id}
    +
    O(W^2),
    \label{eq:first_order_identity}
\end{equation}
which is the $K=1$ instance of Theorem~\ref{thm:k_order_error_bound}.

The map in Eq.~\eqref{eq:N1_inv} is Hermiticity-preserving and
trace-preserving, and is implemented by PEC sampling.  Its quasiprobability
norm is
\begin{equation}
    \gamma_1=1+2\tilde W .
    \label{eq:first_order_gamma}
\end{equation}
The corresponding post-selection and PEC variance factors are then combined
using the general sampling-cost formula in Sec.~\ref{sec:sampling_cost}.

\subsection{Sampling Cost Comparison}
\label{sec:ghz_sampling}

Specifically, write $\gamma_k := 1 + 2\tilde{W}_k$ for the PEC variance factor of the $k$-th QED+PEC cycle [cf.~\eqref{eq:N1_inv}] and $p_{\mathrm{success},k}$ for its QED acceptance probability; the per-cycle cost~\eqref{eq:sample_cost} then composes multiplicatively across the $M = L/T$ cycles into
\begin{equation}
C_{\mathrm{QED+PEC}} \;:=\; \prod_{k=1}^{M} \frac{\gamma_k^{2}}{p_{\mathrm{success},k}},
\qquad
N \;=\; C_{\mathrm{QED+PEC}}\,/\,\varepsilon^{2},
\label{eq:C_total}
\end{equation}
For the pure-PEC baseline, we apply standard PEC directly to the unencoded
GHZ circuit on $n_{\mathrm{pure}}=n-2$ qubits.  There is no error-detection
post-selection, so the sampling overhead contains only the PEC variance
factor.  For each CNOT layer, the first-order quasiprobability norm is
\begin{equation}
\begin{split}
\gamma_{\mathrm{PEC}}
&=
1+2W_{\mathrm{PEC}}\\
W_{\mathrm{PEC}}
&=
3(n_{\mathrm{pure}}-2)w_1+15w_2
=
3(n-4)w_1+15w_2
\end{split}
\end{equation}
Since the unencoded GHZ circuit contains $n_{\mathrm{pure}}-1=n-3$ CNOT
layers, the total pure-PEC sampling overhead is
\begin{equation}
\begin{split}
C_{\mathrm{PEC}}
&=
\gamma_{\mathrm{PEC}}^{\,2(n_{\mathrm{pure}}-1)}
=
\left[1+2\bigl(3(n-4)w_1+15w_2\bigr)\right]^{2(n-3)}\\
N_{\mathrm{PEC}}
&=
C_{\mathrm{PEC}}/\varepsilon^2
\end{split}
\end{equation}

Figure~\ref{fig:sample_cost}(a) compares the total sampling overhead of QED+PEC and standard PEC as a function of the number of physical qubits $n$ on a logarithmic scale. Note that the sampling cost reported for our QED+PEC protocol accounts not only for the PEC variance overhead but also for the repetitions incurred by discarding runs that fail QED post-selection. We emphasize that the vertical axis in panel~(a) shows the \emph{precision-independent} part of the sampling cost---i.e., the variance amplification factor $\prod_k \gamma_k^2 / p_{\mathrm{success},k}$---rather than the absolute number of samples required to achieve a given estimation accuracy~$\varepsilon$. The full sample count is $N = \varepsilon^{-2}\prod_k \gamma_k^2 / p_{\mathrm{success},k}$, but since the $\varepsilon^{-2}$ prefactor is common to all protocols, it cancels in the ratio plotted in panel~(b). Consequently, the absolute values in panel~(a) reflect relative trends across protocols, while the ratios in panel~(b) are numerically exact.

All methods exhibit at least exponential growth in sample cost with $n$, as the total noise weight $W$ accumulates over the full circuit depth. At $n = 100$, taking $\epsilon = 1$, standard PEC already requires $\sim 60$ samples per estimate; at $n = 200$, this explodes to $\sim 10^7$. In contrast, the QED+PEC curves grow far more slowly. For $T = 1$ (error detection after every logical gate), the sampling cost at $n = 200$ is only about $10^3$---an improvement of roughly three to four orders of magnitude over standard PEC. The curves for different $T$ values are clearly stratified: in this ideal-measurement benchmark, more frequent error detection (smaller $T$) yields lower overhead, and the gap between protocols widens with increasing $n$.

Within the ideal-measurement model, this behavior is a direct manifestation of the trade-off analyzed in Secs.~\ref{sec:sampling_cost} and~\ref{sec:toy_model}: error detection filters out the majority of noise branches via post-selection, keeping the residual noise weight $\tilde{W}$ per cycle small and thereby suppressing the PEC variance factor $\gamma^2 = (1 + 2\tilde{W})^2$. Although QED post-selection introduces additional repetitions due to discarded runs, the resulting reduction in PEC sampling overhead more than compensates, leading to an exponential net decrease in the total sampling cost.

\begin{figure*}[t]
    \centering
    \includegraphics[width=\textwidth]{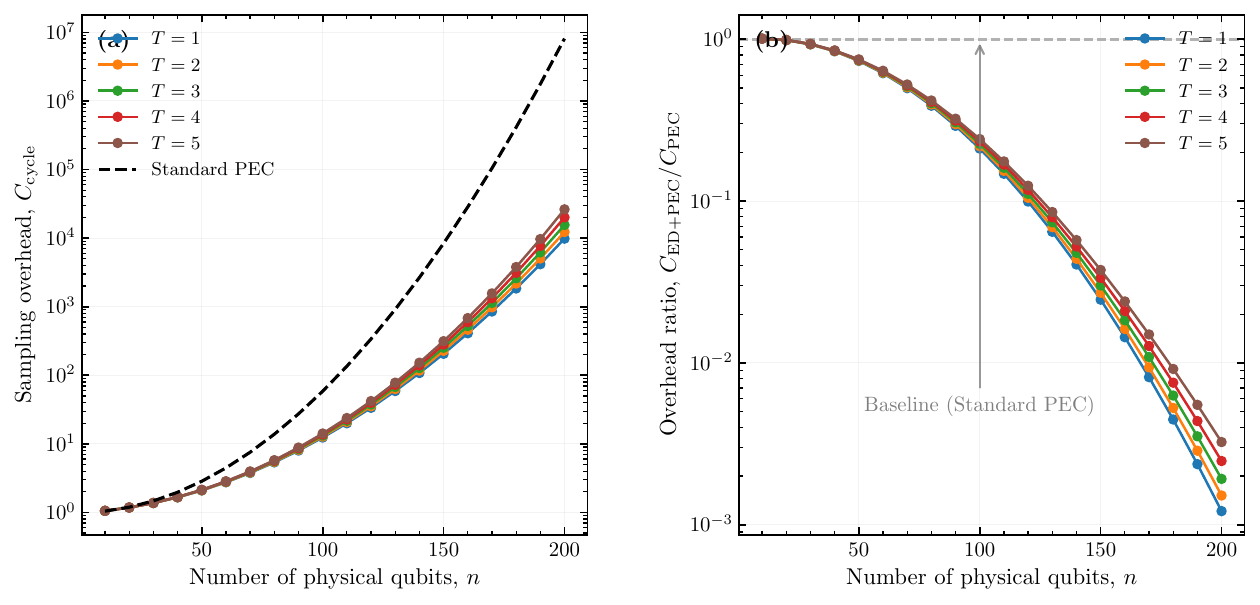}
    \caption{Sampling overhead for logical GHZ state preparation on the $[[n, n{-}2, 2]]$ code with $p_1 = 10^{-4}$, $p_2 = 10^{-3}$, assuming ideal stabilizer measurements. The noise is applied to the logical-gate blocks; syndrome-extraction circuit noise is not included. The total circuit depth increases linearly with the number of physical qubits $n$. (a)~Total sampling overhead $C_{\mathrm{cycle}}$ versus $n$. Solid curves show QED+PEC for error detection intervals $T = 1$ through $5$; the black dashed curve shows standard PEC applied to $n{-}2$ unencoded qubits. (b)~Ratio $C_{\mathrm{QED+PEC}}/C_{\mathrm{PEC}}$ versus $n$. The faster than linear decrease on the log scale shows that the ideal-measurement advantage grows quickly with the system size.}
    \label{fig:sample_cost}
\end{figure*}

Figure~\ref{fig:sample_cost}(b) plots the ratio $C_{\text{QED+PEC}} / C_{\text{PEC}}$ as a function of $n$. As $n$ increases, the ideal-measurement advantage of QED+PEC over standard PEC grows rapidly: the ratio decreases on a logarithmic scale at a rate that is at least linear, consistent with an exponential reduction in the reported sampling-cost factor over the simulated range. Moreover, the curves are clearly stratified by the detection frequency $T$: smaller $T$ (more frequent error detection) consistently yields a lower sampling cost ratio, demonstrating that increasing the detection frequency further suppresses the total overhead in this regime.

\begingroup
\begin{center}
    \centering
    \includegraphics[width=0.75\columnwidth]{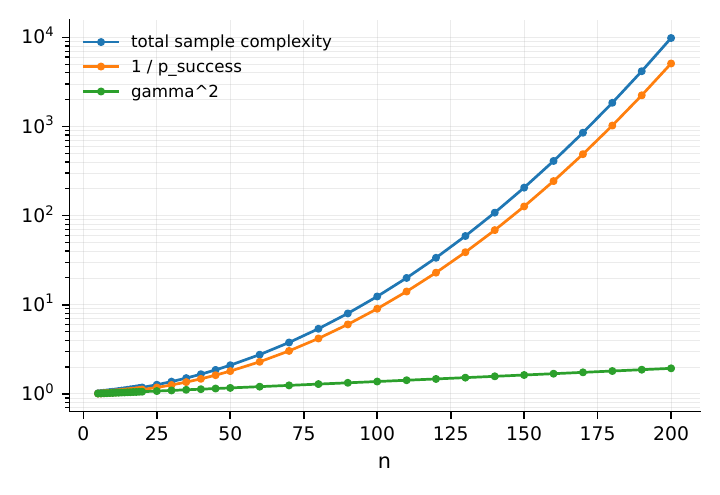}
    \refstepcounter{figure}
    \label{fig:cost_components}
    \par\smallskip
    \parbox{0.95\columnwidth}{\small\textbf{\figurename~\thefigure.} Decomposition of the ideal-measurement QED+PEC sampling cost for $T=1$ into the QED post-selection factor $\prod_k p_{\mathrm{success},k}^{-1}$ and the PEC variance factor $\prod_k \gamma_k^2$. The QED term dominates the growth with $n$, while the PEC term remains nearly flat. This confirms that, when stabilizer measurements are ideal or cheap, most first-order physical faults are removed by post-selection and only a small residual logical channel is left for PEC to invert.}
\end{center}
\endgroup

To understand the origin of this advantage, Fig.~\ref{fig:cost_components} decomposes the total sampling cost $C_{\mathrm{QED+PEC}}$ for the $T=1$ case into the QED post-selection overhead $\prod_k p_{\mathrm{success},k}^{-1}$ and the PEC variance overhead $\prod_k \gamma_k^{2}$. The QED post-selection factor dominates the growth with $n$, while the PEC variance factor increases only slowly. This confirms that, under first-order PEC and ideal stabilizer measurements, the residual noise after QED is small and contributes modestly to the total sampling cost. Since each error event canceled by ideal QED costs far fewer samples than one canceled by PEC, offloading as much of the noise budget as possible onto QED is the key driver of the overhead reduction observed in Fig.~\ref{fig:sample_cost}.

\subsection{Fidelity Verification and Comparison with Pure Error Detection}
\label{sec:fidelity_comparison}
To benchmark the performance of our protocol, we compute the fidelity between the post-selected output of the QED+PEC protocol and the ideal noiseless logical GHZ state evolved under the same circuit,
\begin{equation}
\begin{split}
\mathcal{F} &= \mathcal{F}(\left(\hat{N}^{-1}_1\circ N_{\text{reduced}} \circ \mathcal{U}\right)^{(L/T)}\circ \cdots \\
&\circ \left(\hat{N}^{-1}_1\circ N_{\text{reduced}} \circ \mathcal{U}\right)^{(1)}(\rho)\ ,\ \mathcal{U}^{(L/T)}\cdots
\mathcal{U}^{(1)}(\rho))
\end{split}
\label{eq:fidelity_def_qedpec}
\end{equation}
Here this quantity is computed statistically as $\Tr(\op{GHZ}{GHZ}\rho_{\text{mitigated}})$. The fidelity is estimated by Monte Carlo sampling of the PEC quasiprobability distribution, a target standard error of $10^{-3}$, and a minimum of $10{,}000$ post-selected samples per data point. We emphasize that the inversion map $\hat{N}^{-1}_1$ retains only the leading-order term of the PEC expansion; the results reported below therefore reflect what is already achievable with a first-order correction.

For reference, we evaluate the analogous fidelity for the pure QEDC protocol, obtained by replacing each $\hat{N}^{-1}_1\circ N_{\text{reduced}}$ block by $N_{\text{reduced}}$ alone, i.e., by removing the PEC inversion step:
\begin{equation}
\begin{split}
\mathcal{F} &= \mathcal{F}(\left( N_{\text{reduced}} \circ \mathcal{U}\right)^{(L/T)}\circ \cdots \\
&\circ \left( N_{\text{reduced}} \circ \mathcal{U}\right)^{(1)}(\rho)\ ,\ \mathcal{U}^{(L/T)}\cdots\mathcal{U}^{(1)}(\rho))
\end{split}
\label{eq:fidelity_def_qedc}
\end{equation}
\begin{figure*}[t]
    \centering
    \includegraphics[width=0.78\textwidth]{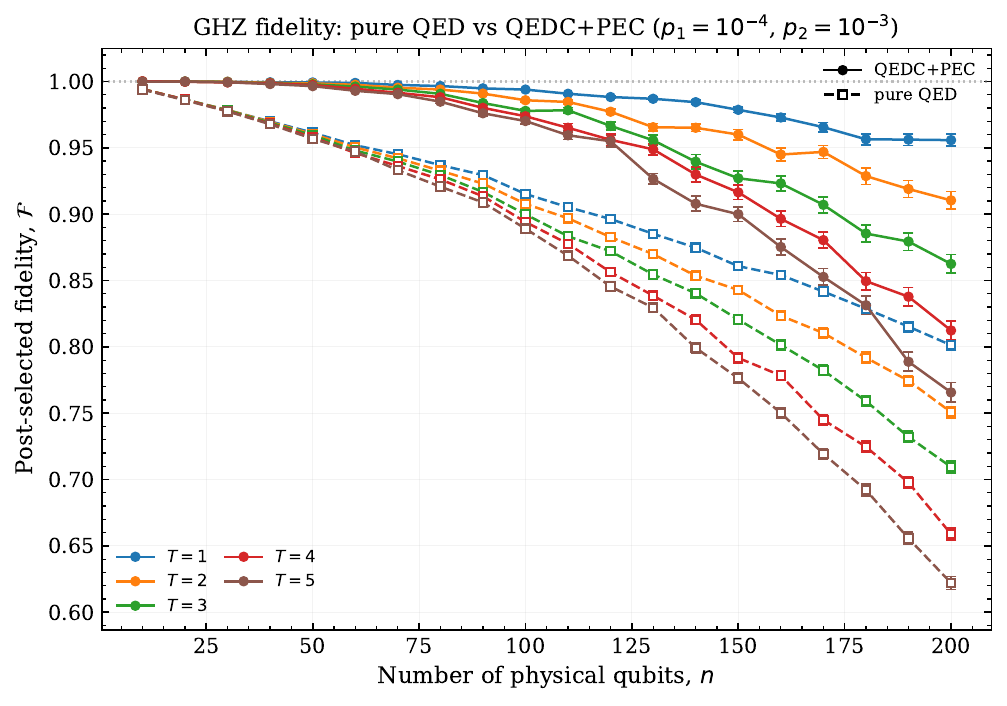}
    \caption{Post-selected logical GHZ fidelity of QED+PEC (solid markers, labeled QEDC+PEC in the legend) versus pure quantum error detection without PEC (open markers, dashed), assuming ideal stabilizer measurements. Both protocols use the same $[[n,n{-}2,2]]$ Iceberg encoding and the same circuit-level noise model on the logical-gate blocks with $p_1 = 10^{-4}$, $p_2 = 10^{-3}$; colors encode the error detection interval $T = 1, \ldots, 5$. Across all $T$ and all $n$ from 10 to 200, QED+PEC dominates pure QED by a substantial margin that grows with the system size, reaching an absolute fidelity advantage of $\approx 0.15$ at $n = 200$. Error bars indicate one Monte Carlo standard error.}
    \label{fig:fidelity_compare}
\end{figure*}

The same data also provide a numerical check of the perturbative scale implied by the error bound. For the first-order implementation, Corollary~\ref{cor:perturbative_scaling} and Eq.~\eqref{eq:end_to_end_algorithmic_bound_main} motivate the end-to-end fidelity-loss scale
\[
B_1 := \exp\!\left(\sum_m W_m^2\right)-1 ,
\]
up to the certified constants in the one-block quantities $\epsilon_{m,1}$. Since the GHZ projector has operator norm one, Eq.~\eqref{eq:observable_bias_algorithmic_bound_main} applies directly to $1-\mathcal{F}$. Representative data points are shown in Table~\ref{tab:fidelity_bound_check}. This comparison shows that the theoretical error bound is conservative but reasonable in the simulated regime: the observed fidelity loss remains below the theorem-motivated scale and follows the same growth trend with both system size and detection interval.

\begin{table}[htbp]
    \centering
    \caption{Comparison between the first-order perturbative scale and the observed GHZ fidelity loss for representative error-detection intervals.}
    \label{tab:fidelity_bound_check}
    \begin{tabular}{ccc}
        \toprule
        Parameter $(n,T)$ & $B_1$ & $1-\mathcal F$ \\
        \midrule
        $(30,1)$  & $2.29\times 10^{-3}$ & $2.18\times 10^{-4}$ \\
        $(100,1)$ & $5.36\times 10^{-2}$ & $6.09\times 10^{-3}$ \\
        $(200,1)$ & $4.44\times 10^{-1}$ & $4.42\times 10^{-2}$ \\
        $(30,5)$  & $1.10\times 10^{-2}$ & $5.54\times 10^{-4}$ \\
        $(100,5)$ & $2.94\times 10^{-1}$ & $2.98\times 10^{-2}$ \\
        $(200,5)$ & $5.22$                & $2.34\times 10^{-1}$ \\
        \bottomrule
    \end{tabular}
\end{table}

Figure~\ref{fig:fidelity_compare} reports the post-selected fidelity of QED+PEC (solid markers) and pure QEDC (open markers, dashed) for $T = 1, \ldots, 5$ and $n = 10, \ldots, 200$. Across the entire ideal-measurement parameter range, QED+PEC outperforms pure QEDC by a wide margin. The reduction is most clearly seen on the infidelity scale: at $T = 1$, the infidelity $1 - \mathcal{F}$ is suppressed from $2.2 \times 10^{-2}$ to $2.2 \times 10^{-4}$ at $n = 30$, from $8.5 \times 10^{-2}$ to $6.1 \times 10^{-3}$ at $n = 100$, and from $2.0 \times 10^{-1}$ to $4.4 \times 10^{-2}$ at $n = 200$, corresponding to roughly two orders of magnitude of suppression for small to moderate systems and a factor of approximately $4$ to $5$ at $n = 200$. The same comparison can be phrased in terms of system-size scaling at a fixed fidelity target: to maintain $\mathcal{F} > 0.95$ at $T = 1$, pure QEDC is limited to $n \lesssim 65$, whereas QED+PEC remains above this threshold over the full range tested up to $n = 200$; for the more demanding setting $T = 5$, the corresponding cutoffs are $n \approx 55$ for pure QEDC and $n \approx 125$ for QED+PEC, a factor of roughly $2.3$ in supported logical size. The advantage is consistent across every detection interval studied and is achieved using only the first-order PEC expansion, indicating that even a leading-order quasiprobability correction substantially suppresses the residual logical noise that escapes the error-detection step.

\subsection{Robustness to Characterization Uncertainty}
\label{sec:characterization_robustness}

This test concerns uncertainty in the physical gate-noise rates used to compile the first-order PEC table; it does not model syndrome-extraction circuit noise. The first-order PEC correction table uses the noise parameters $p_1$ and $p_2$ as fixed inputs. In practice these are estimated from characterization experiments and carry finite statistical uncertainty. To test sensitivity to such errors, we perturb every elementary fault probability from its nominal value $p$ to $p\,(1+r)$, where $r$ is drawn independently from $\mathrm{Uniform}[-r_{\max}, r_{\max}]$ for each fault channel, while the PEC table continues to use the nominal $p_1$, $p_2$. The maximum relative drift $r_{\max}$ is swept from $5\times10^{-4}$ to $0.5$, with $n=10$--$200$ and $T=1$--$5$.

\begin{figure*}[t]
    \centering
    \includegraphics[width=0.78\textwidth]{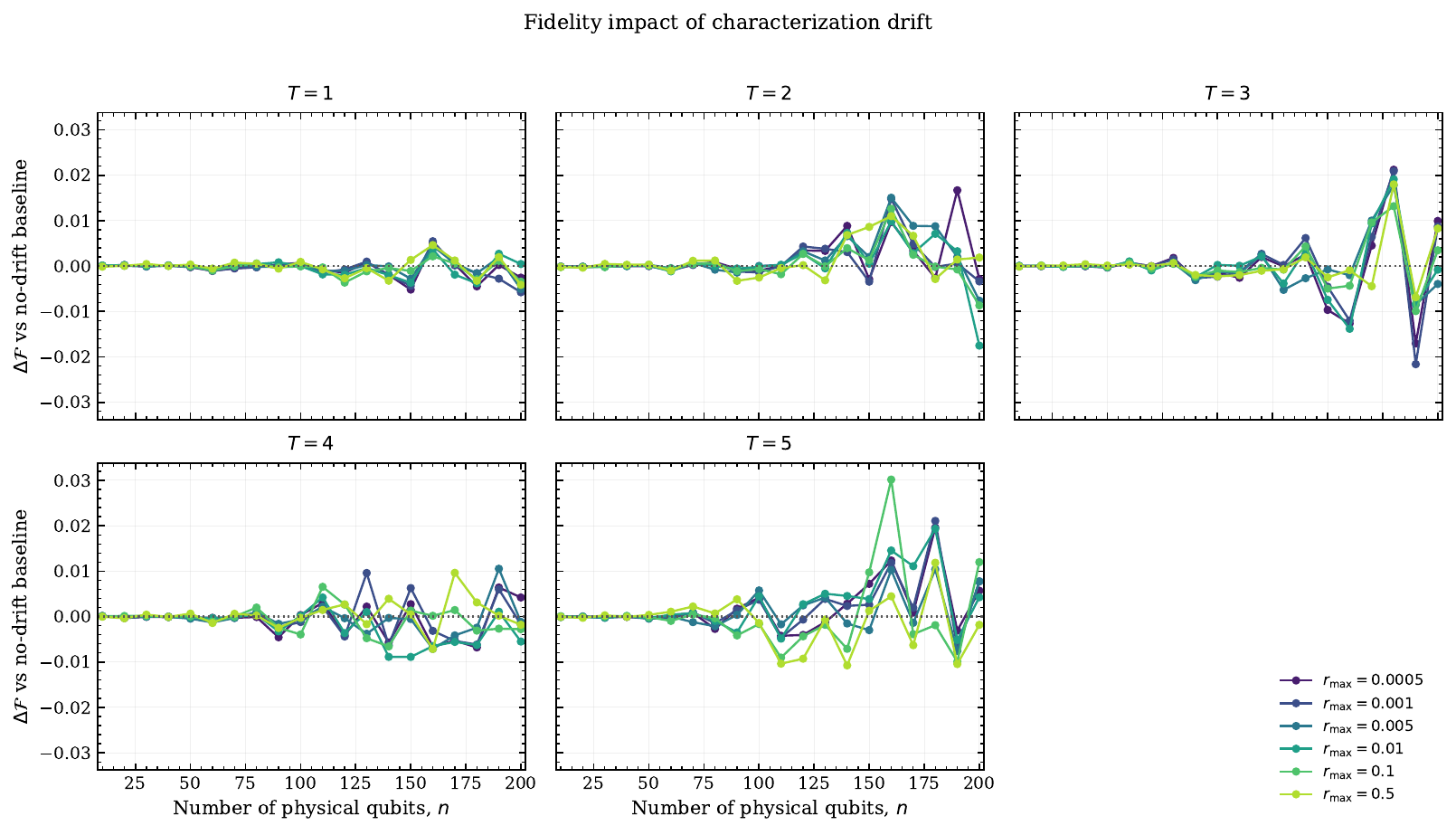}
    \caption{Fidelity impact of noise-characterization drift. Each panel corresponds to a fixed detection interval $T$ and shows the fidelity difference $\Delta\mathcal{F} = \mathcal{F}_{\mathrm{drifted}} - \mathcal{F}_{\mathrm{baseline}}$ versus $n$ for six values of $r_{\max}$ (see legend). The dashed line marks $\Delta\mathcal{F} = 0$. The mean $\Delta\mathcal{F}$ remains at the $10^{-4}$ level for all $r_{\max}$, and fluctuations stay within the Monte Carlo standard error of the baseline (${\sim}10^{-3}$).}
    \label{fig:rmax_delta}
\end{figure*}

Figure~\ref{fig:rmax_delta} reports the fidelity difference across all parameter combinations. For every $r_{\max}$ value tested, the mean $\Delta\mathcal{F}$ is at the $10^{-4}$ level, and the point-by-point fluctuations around zero lie within the Monte Carlo standard error of the baseline simulation. Even at $r_{\max}=0.5$, where individual fault probabilities can be perturbed by up to $\pm 50\%$, the aggregate fidelity impact is statistically indistinguishable from zero. This insensitivity follows from the structure of the first-order PEC inversion: the correction depends on the per-block total noise weight $W = cnT\bar{p}$, which averages over many independent fault channels, so random perturbations of individual channels largely cancel. The practical implication is that the QED+PEC protocol does not require high-precision characterization of each elementary noise rate---knowledge of the average depolarizing parameters at the ${\sim}10\%$ level suffices for this first-order ideal-measurement implementation.

\subsection{Syndrome-noise checks}
\label{sec:syndrome_noise_checks}

The benchmark above assumes ideal stabilizer measurements. We now examine two departures from this assumption: a classical readout-only flip model and a circuit-level noisy GHZ-assisted extraction model.

\subsubsection{Readout-only syndrome flips}
\label{sec:readout_only}

We first isolate the effect of classical syndrome-record errors. After each otherwise ideal stabilizer measurement, each reported outcome is flipped independently with probability $p_m$. The first-order PEC table is unchanged. This model is a diagnostic that separates false rejection from false acceptance; it is not a full extraction model for the global Iceberg checks.

Figure~\ref{fig:readout_fidelity} shows the final GHZ fidelity for $p_m$ from $10^{-5}$ to $5\times10^{-3}$, $n=50,100,150,200$, and $T=1,\ldots,5$. No systematic monotonic fidelity loss with $p_m$ is observed. The dominant dependence is on $n$ and $T$. For example, at $n=200$ and $T=1$, the fidelity varies from $0.946$ to $0.957$ over the whole $p_m$ sweep, with one-sigma Monte Carlo error of order $5\times10^{-3}$. Thus readout-only errors do not visibly change the first-order logical correction accuracy in this test.

\begin{figure*}[t]
    \centering
    \includegraphics[width=\textwidth]{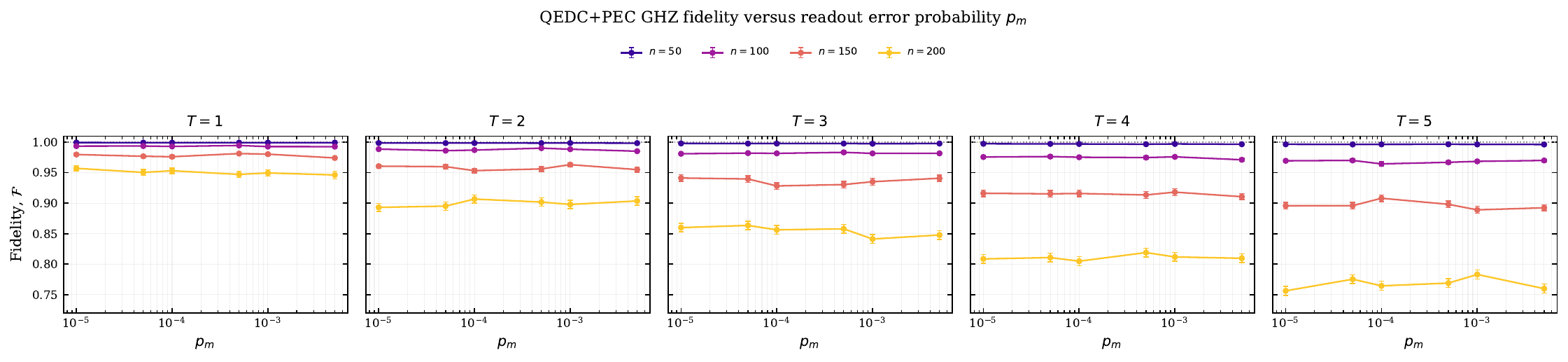}
    \caption{Effect of readout-only syndrome errors on the QED+PEC GHZ fidelity. Each reported stabilizer outcome is flipped independently with probability $p_m$ after an otherwise ideal stabilizer measurement, and the PEC table is the same first-order table used in the ideal-syndrome calculation. No systematic monotonic fidelity loss is observed over $p_m=10^{-5}$ to $5\times10^{-3}$; the main dependence is on $n$ and $T$. Error bars show one Monte Carlo standard error.}
    \label{fig:readout_fidelity}
\end{figure*}

The sampling cost behaves differently. Let $p_{\mathrm{success}}^{\mathrm{true}}$ be the probability that the underlying ideal syndrome is trivial, and $p_{\mathrm{success}}^{\mathrm{obs}}$ the probability that the reported syndrome is accepted after readout flips. The cost is controlled by $p_{\mathrm{success}}^{\mathrm{obs}}$. Figure~\ref{fig:readout_sample_cost} shows that readout flips increase the observed post-selection factor while leaving the PEC variance factor nearly unchanged. For $p_m=10^{-3}$ and $T=1$, the total cost increases by factors of $1.10$, $1.21$, and $1.47$ at $n=50,100,200$, respectively. This is consistent with the first-order mechanism in Sec.~\ref{sec:ed_step}: a single readout flip can falsely reject an accepted trajectory at first order in $p_m$, whereas false acceptance of a detectable fault is second order.

\begin{figure*}[t]
    \centering
    \includegraphics[width=0.82\textwidth]{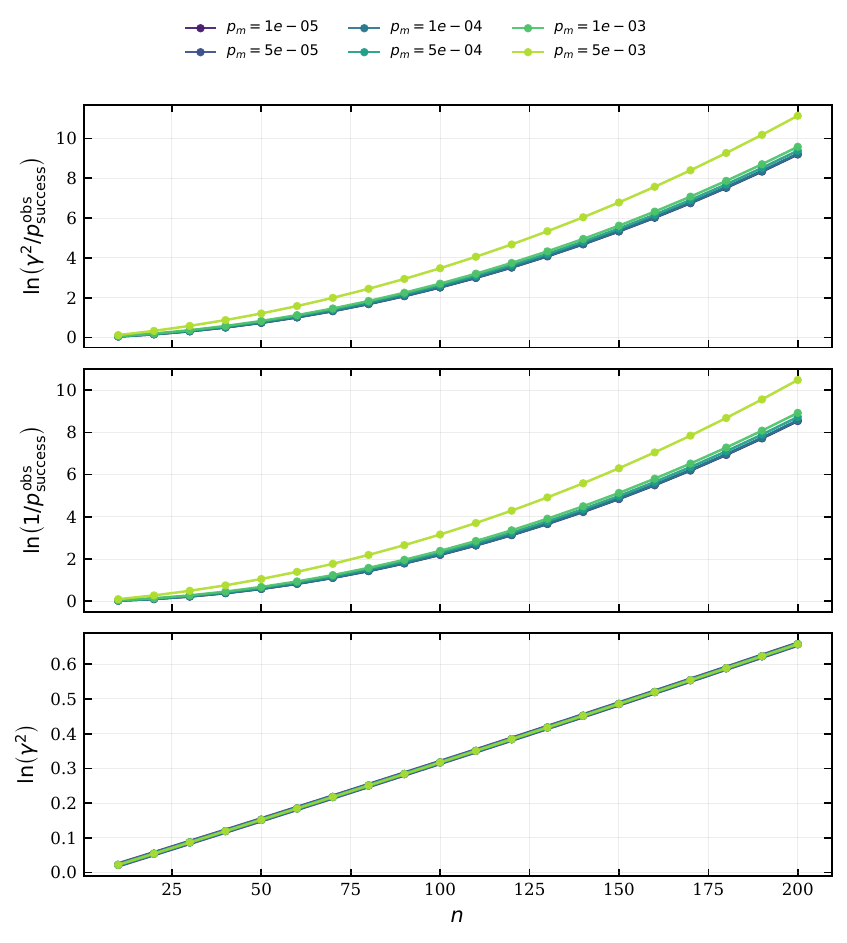}
    \caption{Sampling-cost effect of readout-only syndrome errors for $T=1$. Top: observed total cost factor $\ln(\gamma^2/p_{\mathrm{success}}^{\mathrm{obs}})$. Middle: observed post-selection factor $\ln(1/p_{\mathrm{success}}^{\mathrm{obs}})$. Bottom: PEC variance factor $\ln(\gamma^2)$. Readout errors mainly increase the post-selection denominator, while the PEC factor remains almost unchanged.}
    \label{fig:readout_sample_cost}
\end{figure*}

\subsubsection{Circuit-level noisy GHZ-assisted extraction}
\label{sec:noisy_ghz_extraction}

The readout-only model above does not capture the main physical cost of measuring the Iceberg stabilizers, because $X^{\otimes n}$ and $Z^{\otimes n}$ are global checks. We therefore also test an explicit GHZ-assisted extraction circuit. For each QED round, ideal GHZ ancillas of size $m$ are supplied for the two global stabilizers. Data--ancilla CNOTs are scheduled in parallel batches; CNOTs sharing a data qubit are delayed to the next available layer. The same one- and two-qubit depolarizing noise rates, $p_1=10^{-4}$ and $p_2=10^{-3}$, are applied to the extraction circuit, including ancilla operations. The data below use $m=n/2$ and one extraction layer per QED round.

Figure~\ref{fig:noisy_ghz_extraction} shows that the circuit-level extraction cost changes the conclusion qualitatively. At $n=80$, the noisy-extraction cost is about $114.8$, compared with $6.71$ for ideal measurement and $13.82$ for pure PEC; the accepted-shot product drops from $0.865$ at $n=20$ to $0.074$ at $n=80$. For the Iceberg code, each stabilizer is a weight-$n$ global check, so any CNOT-based extraction introduces $\Theta(n)$ additional noisy two-qubit interactions per detection round. Increasing the detection frequency therefore also multiplies the number of noisy extraction circuits.

\begin{figure*}[t]
    \centering
    \includegraphics[width=0.78\textwidth]{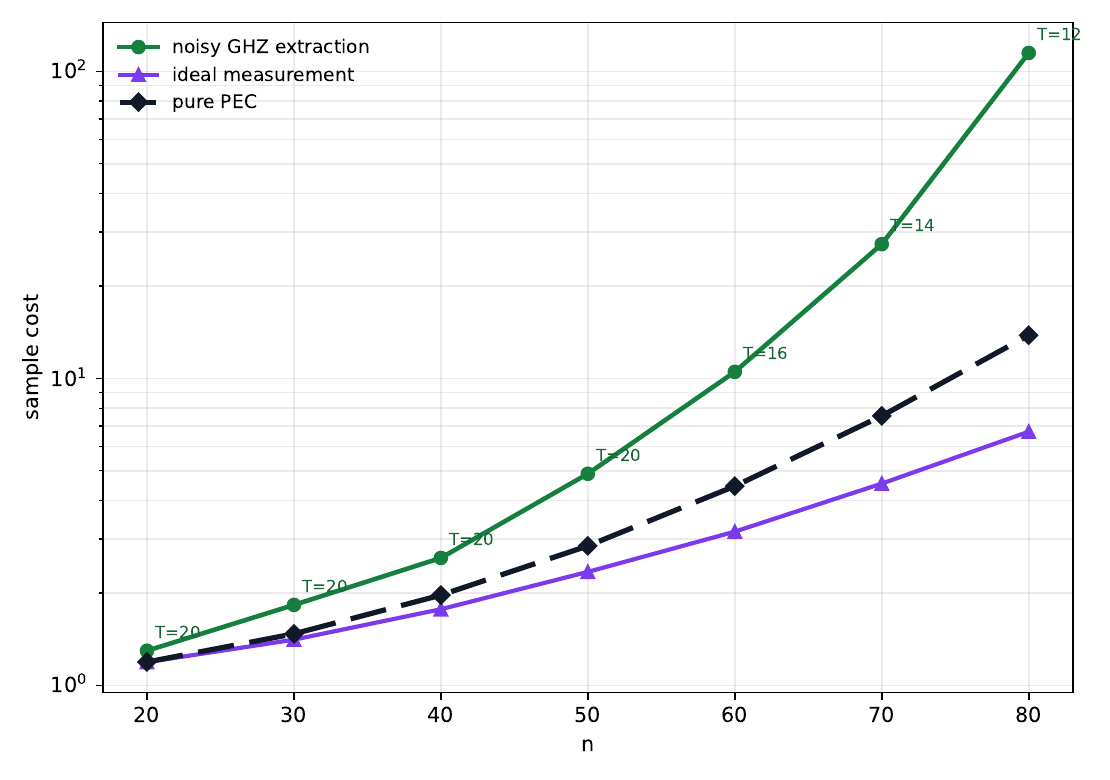}
    \caption{Circuit-level noisy GHZ-assisted extraction for the Iceberg code. Green circles show QED+PEC with noisy data--ancilla CNOT layers in the GHZ-assisted measurement of $X^{\otimes n}$ and $Z^{\otimes n}$, using ideal GHZ ancillas. Purple triangles show the ideal stabilizer-measurement baseline. Black diamonds show pure PEC. Annotated values indicate the detection interval $T$ used in the sweep. At $n=80$, the noisy-extraction cost is about $17.1\times$ the ideal-measurement value and $8.3\times$ the pure-PEC value.}
    \label{fig:noisy_ghz_extraction}
\end{figure*}

To test the detection-interval dependence directly, we fix $n=100$ and sweep $T$ under the noisy GHZ-assisted extraction model. Figure~\ref{fig:noisy_ghz_T_sweep_n100} shows the total cost for $m=20,\ldots,70$, together with the pure-PEC baseline $C_{\mathrm{PEC}}=58.55$. All curves are U-shaped in $T$ and reach their minimum near $T=20$, not at the smallest tested interval. For $m=50$ the cost decreases from $1.59\times10^{5}$ at $T=5$ to $1.42\times10^{3}$ at $T=20$, and then rises again to $1.45\times10^{5}$ at $T=35$. Even at the best noisy-extraction point, the cost is about $24.3$ times the pure-PEC baseline in this setting. The ideal-measurement rule that smaller $T$ gives lower cost therefore does not transfer to circuit-level noisy extraction; instead, smaller $T$ repeats the noisy global-check extraction too often, while larger $T$ leaves a larger residual channel for PEC.

\begin{figure*}[t]
    \centering
    \includegraphics[width=0.82\textwidth]{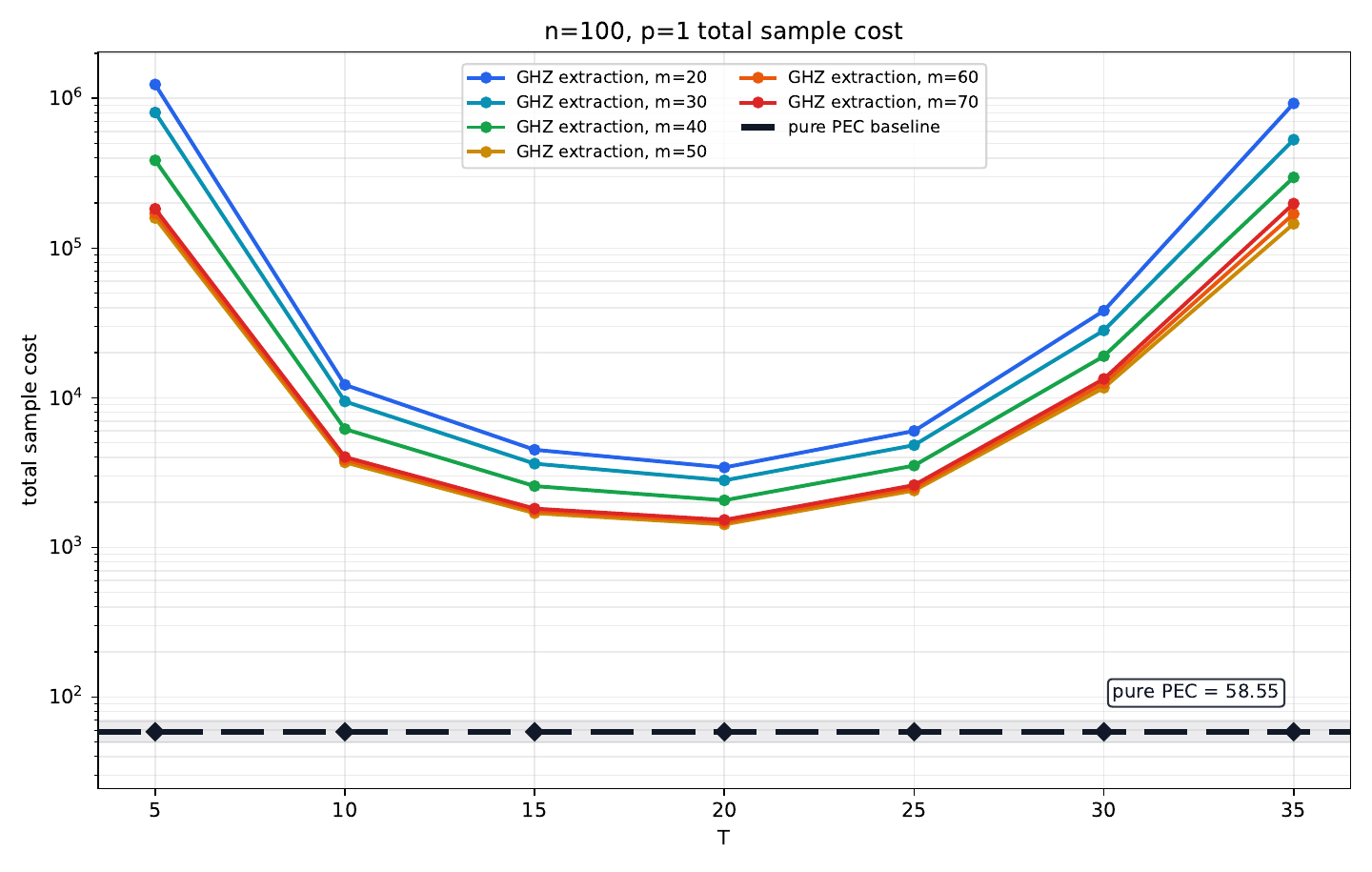}
    \caption{Fixed-$n$ detection-interval sweep with circuit-level noisy GHZ-assisted extraction. The total QED+PEC sample cost is shown for $n=100$ and GHZ extraction size $m=20,30,40,50,60,70$. The dashed horizontal line is the pure-PEC baseline $C_{\mathrm{PEC}}=58.55$. Over the tested grid of $T$ values, all curves are U-shaped and attain their minimum at $T=20$, rather than at the smallest tested $T$. Even at this minimum, the noisy-extraction QED+PEC cost remains above pure PEC in this stress test.}
    \label{fig:noisy_ghz_T_sweep_n100}
\end{figure*}

The optimal detection interval is code-, circuit-, and hardware-dependent. The negative result here applies specifically to noisy CNOT-based extraction of the global Iceberg checks; it does not show that the perturbative QED+PEC construction fails in general. Codes with low-weight checks, subsystem codes, qLDPC-like codes, or hardware-native parity measurements are natural candidates in which the extraction overhead may be small enough for the QED+PEC trade-off to remain favorable.

\section{Physical Intuition from Toy Models}
\label{sec:toy_model}

Our ideal-measurement numerical results show that the total sample complexity of the
QED+PEC protocol \emph{decreases} as the error-detection frequency
increases.
This is a nontrivial observation: more frequent post-selection
discards more shots, yet the net effect is a reduction in overall cost.
To understand why, we construct two analytically solvable toy models
and compare their scaling with the detection/cancellation interval
$\tau$.
The toy models below intentionally exclude syndrome-extraction circuit noise; they explain the cheap-detection mechanism, not the full hardware trade-off once global stabilizer extraction is noisy.

\subsection{Two toy models}
\label{sec:toy:setup}

Both models share the same depolarizing noise rate $\gamma$ and total
evolution time $T$, but differ in the Hilbert space structure and the
error-mitigation strategy.

\paragraph{Model A (QED+PEC).}
The system lives in an $N=2^n$-dimensional Hilbert space containing a
$2$-dimensional code subspace with projector $P$
($\mathrm{Tr}\,P=2$).
The initial state $\rho_0$ is supported on the code subspace.
Noise is global depolarization at rate $\gamma$:
after a time interval $\tau$, the state becomes
\begin{equation}
\rho(\tau)=(1-p)\,\rho_0+\frac{p}{N}\,I\,,
\qquad p=1-e^{-\gamma\tau}\,.
\label{eq:toy_rho}
\end{equation}
At the end of each interval, we project onto $P$ (error detection) and
discard runs that fail, then apply PEC to invert the residual noise
within the code subspace.

\paragraph{Model B (Pure PEC, no ancilla).}
{
The system is unencoded, confined to a $2$-dimensional physical qubit space.}
Noise is depolarization within this space:
\begin{equation}
\rho(\tau)=(1-p)\,\rho_0+p\,\frac{I}{2}\,,
\qquad p=1-e^{-\gamma\tau}\,.
\label{eq:toy_rhoB}
\end{equation}
Since the system never leaves the {physical space}, there is no
post-selection ($S_1=1$); the only mitigation tool is PEC, applied
every $\tau$.

{This toy model relies on simplifying assumptions (isotropic depolarization,
ideal measurements, single logical qubit) that are not hardware-faithful,
but they suffice to capture the essential trade-off between post-selection
cost and PEC variance reduction.
The qualitative prediction below---that frequent error detection reduces sample
complexity---is confirmed by the full numerical simulations in
Sec.~\ref{sec:numerical}.}

\subsection{Derivation and core scaling}
\label{sec:toy:results}

We now derive the leading sample-complexity scaling for the two models.  The
appendix keeps only the intermediate Taylor-expansion checks.

\paragraph{Model A (QED+PEC).}
After one interval, projection onto the two-dimensional code subspace gives
\begin{equation}
    \tilde\rho_P(\tau)
    =
    P\rho(\tau)P
    =
    (1-p)\rho_0+\frac{p}{N}P ,
    \qquad
    p=1-e^{-\gamma\tau}.
    \label{eq:toyA_unnormalized}
\end{equation}
Hence the single-round survival probability is
\begin{equation}
    p_{\mathrm{succ}}
    =
    \operatorname{Tr}\tilde\rho_P(\tau)
    =
    1-p\left(1-\frac{2}{N}\right).
    \label{eq:toyA_psucc}
\end{equation}
Within the code subspace, the normalized post-selected state is a depolarizing
Pauli channel acting on $\rho_0$.  Writing
\[
    P=\frac{1}{2}\left(\rho_0+X\rho_0X+Y\rho_0Y+Z\rho_0Z\right),
\]
the normalized channel has identity coefficient
\[
    A=\frac{1-p+p/(2N)}{p_{\mathrm{succ}}}
\]
and equal nonidentity Pauli coefficients
\[
    B=\frac{p}{2Np_{\mathrm{succ}}}.
\]
The nontrivial Pauli-transfer eigenvalue is therefore
\begin{equation}
    \lambda_A=A-B
    =
    \frac{1-p}{1-p(1-2/N)} .
    \label{eq:toyA_lambda}
\end{equation}
The single-round PEC norm for the inverse of an isotropic qubit
depolarizing channel is
\begin{equation}
    \gamma_{\mathrm{PEC}}^{(A)}
    =
    \frac{3}{2}\lambda_A^{-1}-\frac{1}{2}
    =
    1+\frac{3}{N}\left(e^{\gamma\tau}-1\right).
    \label{eq:toyA_gamma}
\end{equation}
Over $T/\tau$ rounds, the post-selection and PEC variance factors are
\begin{equation}
    S_1=p_{\mathrm{succ}}^{-T/\tau},
    \qquad
    S_2=\left[(\gamma_{\mathrm{PEC}}^{(A)})^2\right]^{T/\tau}.
    \label{eq:toyA_S1S2}
\end{equation}
Expanding the logarithms through the order that contributes
$O(\gamma^2T\tau)$ to the total exponent gives
\begin{align}
    \ln S_1
    &=
    \gamma T\left(1-\frac{2}{N}\right)
    +
    \gamma^2T\tau\left(-\frac{1}{N}+\frac{2}{N^2}\right),
    \label{eq:toyA_lnS1}\\
    \ln S_2
    &=
    \frac{6\gamma T}{N}
    +
    \gamma^2T\tau\left(\frac{3}{N}-\frac{9}{N^2}\right).
    \label{eq:toyA_lnS2}
\end{align}
Thus
\begin{equation}
\begin{split}
    \ln &C_{\mathrm{QED{+}PEC}}
    =
    \ln(S_1S_2)\\
    &=
    \gamma T\left(1+\frac{4}{N}\right)
    +
    \gamma^2T\tau\left(\frac{2}{N}-\frac{7}{N^2}\right).
    \end{split}
    \label{eq:toy_totalA}
\end{equation}

\paragraph{Model B (Pure PEC).}
For the unencoded two-dimensional system, there is no post-selection penalty.
The depolarizing channel over one interval can be written as
\begin{equation}
\begin{split}
    \mathcal E_B(\rho_0)
    &=
    \left(1-\frac{3p}{4}\right)\rho_0
    +
    \frac{p}{4}
    \left(
        X\rho_0X+Y\rho_0Y+Z\rho_0Z
    \right),\\
    &\hspace{2.2em}
    p=1-e^{-\gamma\tau}.
\end{split}
    \label{eq:toyB_channel}
\end{equation}
Its nontrivial Pauli-transfer eigenvalue is
$\lambda_B=1-p=e^{-\gamma\tau}$, so the single-round PEC norm is
\begin{equation}
    \gamma_{\mathrm{PEC}}^{(B)}
    =
    \frac{3}{2}e^{\gamma\tau}-\frac{1}{2}.
    \label{eq:toyB_gamma}
\end{equation}
Applying PEC every $\tau$ over the full duration $T$ gives
\begin{equation}
\begin{split}
    \ln C_{\mathrm{pure\;PEC}}^{(\tau)}
    =
    \frac{T}{\tau}
    \ln\left[
        \left(
            \frac{3}{2}e^{\gamma\tau}-\frac{1}{2}
        \right)^2
    \right]\\
    =
    3\gamma T-\frac{3}{4}\gamma^2T\tau+O(\gamma^3T\tau^2).
    \end{split}
    \label{eq:toy_totalB}
\end{equation}
The negative $O(\gamma^2T\tau)$ correction means that periodic pure PEC
becomes cheaper as $\tau$ increases.  The optimal pure-PEC strategy in this
model is therefore the single-shot choice $\tau=T$, whose exact cost is
\begin{equation}
    C_{\mathrm{pure\;PEC}}^{(\tau=T)}
    =
    \left(
        \frac{3}{2}e^{\gamma T}-\frac{1}{2}
    \right)^2
    \approx
    \frac{9}{4}e^{2\gamma T}
    \qquad
    (\gamma T\gg 1).
    \label{eq:toy_purePEC_bound}
\end{equation}

\subsection{Physical interpretation}
\label{sec:toy:interpretation}

The contrast between Eqs.~\eqref{eq:toy_totalA} and
\eqref{eq:toy_totalB} is striking: the $\mathcal{O}(\gamma^2 T\tau)$
correction has \emph{opposite signs} in the two models.

\paragraph{Pure PEC prefers infrequent cancellation.}
In Model~B, the correction $-\frac{3}{4}\gamma^2 T\tau$ is negative:
increasing $\tau$ (fewer, larger PEC steps) \emph{reduces} the total
cost.
This is because splitting a noise channel into many small pieces and
inverting each one separately compounds the quasiprobability overhead
multiplicatively, making frequent PEC strictly
suboptimal~\cite{31}.
The best pure-PEC strategy is therefore to accumulate all noise and
cancel it in a single shot ($\tau=T$).

\paragraph{QED+PEC prefers frequent detection in the ideal-projection model.}
In Model~A, the correction $+\gamma^2 T\tau\,(2/N-7/N^2)$ is positive
for $N\ge 4$: increasing $\tau$ \emph{increases} the total cost.
As $\tau$ decreases, the total post-selection cost $S_1$ grows because
more rounds of projection discard more shots.
At the same time, the total PEC variance overhead $S_2$ shrinks because
each projection filters out the leaked population, leaving a weaker
effective noise channel within the code subspace that is cheaper to
invert.
Crucially, $S_2$ decreases faster than $S_1$ increases, so the product
$C=S_1 S_2$ drops with decreasing $\tau$---a discrete analogue of the
quantum Zeno effect.

\paragraph{Comparison: the advantage of subspace structure.}
A direct comparison of the two models reveals the advantage conferred
by the ancilla subspace and error detection.
Even the \emph{optimal} pure-PEC strategy (single-shot cancellation,
$\tau=T$) has a sample complexity scaling as $\sim e^{2\gamma T}$
[Eq.~\eqref{eq:toy_purePEC_bound}], whereas the QED+PEC cost in the
frequent-detection limit ($\tau\to 0$) scales as
$e^{\gamma T(1+4/N)}$, with an exponential coefficient
$1+4/N < 2$ for all $N>4$.
In this ideal-projection model, this exponential gap shows that subspace filtering can provide a qualitative advantage that pure PEC cannot achieve by changing the schedule alone. The noisy-extraction data in Sec.~\ref{sec:syndrome_noise_checks} show that this advantage survives in practice only when the physical cost of syndrome extraction is sufficiently small.

\section{Discussion and Outlook}
\label{sec:discussion}

\subsection{Summary}
\label{sec:discussion_summary}

The central regime targeted by this work is the intermediate territory between NISQ and early FTQC: pure PEC suffers from a sampling overhead that grows exponentially in the noise-weighted circuit volume, while pure QEDC is limited by an irreducible logical bias from undetectable fault events and an acceptance probability that decays with the circuit size. These two limitations are structurally distinct, and our main thesis is that they should be addressed together. Error detection and probabilistic error cancellation are not redundant techniques aimed at the same noise: the former filters out the bulk of detectable faults at the level of the stabilizer subspace, while the latter cancels the residual undetectable noise that survives post-selection. Combining the two yields a feedback-free protocol whose joint cost-versus-bias performance can dominate either component alone in a cheap-detection regime.

\subsection{Main contributions}
\label{sec:discussion_contrib}

We constructed a feedback-free QED+PEC protocol that interleaves blocks of noisy logical computation with QEDC stabilizer measurements and PEC inversion on the surviving logical channel. The protocol requires neither real-time syndrome decoding nor active recovery operations: failed trajectories are simply discarded, either online or in classical post-processing.

Conceptually, the role of QED is not only to discard bad trajectories, but to reshape the effective logical channel that PEC must invert: detectable fault branches are removed by stabilizer post-selection, while the surviving branches are renormalized into a weaker residual channel on which quasiprobability cancellation is applied.

A central technical step is the perturbative inverse construction of the post-selected residual channel. For any fixed perturbative order $K$, the number of fault branches that must be enumerated is $O(m^K)$ rather than $2^m$, so that the classical preprocessing remains polynomial in the system size $n$, the detection interval $T$, and the circuit depth. We proved that the order-$K$ implementation cancels all accepted fault branches up to weight $K$, leaving only $O(W^{K+1})$ residual error per QED+PEC block, and that the total error grows at most linearly in the number of blocks. We further decomposed the total sampling cost into a post-selection penalty and a PEC quasiprobability variance factor, making explicit how QEDC reduces the effective PEC overhead by shrinking the post-selected residual noise weight $\tilde{W}$.

\subsection{Numerical findings}
\label{sec:discussion_numerics}

Under ideal stabilizer measurements, the logical GHZ-preparation benchmark using the $[[n,n{-}2,2]]$ Iceberg code shows a large QED+PEC advantage. At $n=200$ and $T=1$, the first-order protocol reduces the sampling overhead by roughly three to four orders of magnitude relative to standard unencoded PEC while maintaining $\mathcal{F}\simeq 0.956$. Compared with pure QEDC under the same ideal-measurement assumption, QED+PEC gives a higher post-selected fidelity, confirming that PEC cancels part of the undetectable logical noise left by QED. In this regime, increasing the detection frequency simultaneously improves the fidelity and reduces the total sampling cost, in contrast to pure PEC, where splitting the noise channel into many small steps compounds the quasiprobability overhead~\cite{31}.

The syndrome-noise checks show that this ideal-measurement trend does not extrapolate directly to noisy syndrome extraction. Readout-only flips leave the fidelity statistically unchanged but increase the observed post-selection penalty. A circuit-level noisy GHZ-assisted extraction of the global Iceberg stabilizers can make the QED+PEC sampling cost much larger than both the ideal-measurement value and pure PEC. A fixed-$n$ sweep further turns the monotonic preference for small $T$ into a finite-$T$ optimization problem, with the optimum near $T=20$ in our setting. The numerical message is therefore two-sided: the protocol can have a large cheap-detection advantage, but this advantage can be lost when syndrome extraction is sufficiently noisy.

\subsection{Physical interpretation}
\label{sec:discussion_physics}

The ideal-measurement QED+PEC advantage admits a Zeno-like interpretation: more frequent stabilizer projections discard more shots, but they also suppress the surviving logical channel's effective noise faster than the post-selection penalty grows. The net result is that the PEC variance factor decreases more rapidly than the inverse acceptance rate increases, so that the product $C=S_1 S_2$ shrinks with decreasing detection interval $\tau$. Our analytically solvable toy models make this trade-off transparent: in the frequent-detection limit, QED+PEC has an exponential coefficient strictly below that of optimal pure PEC, showing that subspace structure together with cheap detection can confer a qualitative advantage which pure PEC cannot recover by changing the cancellation schedule alone.

\subsection{Limitations}
\label{sec:discussion_limitations}

Several caveats should be kept in mind when interpreting our results. First, the main theory and the large-scale numerical demonstration assume ideal stabilizer measurements. The readout-only test shows that classical syndrome flips mainly increase the sampling cost; it is not a faithful model of Iceberg syndrome extraction, because the Iceberg stabilizers are global. Second, when a circuit-level GHZ-assisted extraction circuit is included, the Iceberg benchmark can become post-selection dominated and can lose its sampling advantage over pure PEC. In the fixed-$n$ sweep, the optimal $T$ is finite, showing that smaller $T$ is not generally better once extraction is noisy. This is a limitation of high-weight global checks with noisy CNOT-based extraction, not a proof that QED+PEC is unsuitable for all stabilizer codes. Third, the simulations focus on the $[[n,n{-}2,2]]$ Iceberg code and a single logical GHZ-preparation benchmark; other codes, especially low-weight-check or qLDPC-like codes, may have different extraction-cost trade-offs. Fourth, the numerical demonstration implements only the first-order inverse channel; higher-order implementations and full accepted-instrument modeling of noisy extraction remain to be assessed systematically.

\subsection{Outlook}
\label{sec:discussion_outlook}

Several directions naturally extend the present work.

\paragraph{Non-Clifford logical circuits.}
The current construction relies on the fact that Pauli faults propagate as Pauli faults under Clifford logical gates, so that stabilizer commutation can be checked efficiently with the tableau formalism. Extending QED+PEC to non-Clifford logical circuits will likely require combining the perturbative inverse with Pauli propagation approximations, Clifford-frame expansions, magic-state subroutines, or local tomography of non-Clifford logical blocks.

\paragraph{Noisy syndrome extraction.}
A natural next step is to treat noisy syndrome extraction as an accepted measurement instrument inside the QED+PEC block. At fixed perturbative order, this does not change the polynomial preprocessing principle. The main challenge is physical: extraction faults reduce the observed acceptance probability and add accepted residual logical noise. The Iceberg stress test in Sec.~\ref{sec:syndrome_noise_checks} shows that global checks can make this cost prohibitive under CNOT-based extraction.

\paragraph{Other codes.}
It is important to test the protocol on stabilizer codes with cheaper syndrome extraction. The negative noisy-extraction result for Iceberg is tied to its weight-$n$ global stabilizers. Low-weight-check codes, subsystem codes, and qLDPC-like codes may reduce the extraction-induced acceptance penalty and are natural candidates for future QED+PEC benchmarks.

\paragraph{Code distance and optimal perturbative order.}
The relationship between the code distance $d$, the optimal PEC order $K$, the residual logical mass, the acceptance probability, and the total sampling overhead deserves a systematic study. In particular, the heuristic $K\le d-1$ that emerges naturally for distance-two codes should be tested under richer circuit-level noise, where it may no longer be optimal.

\paragraph{Higher-order implementations.}
Implementing $K>1$ perturbative inverses and benchmarking the predicted $O(W^{K+1})$ bias reduction against the actual quasiprobability overhead will quantify how the bias-versus-cost frontier moves as one climbs the perturbative ladder.

\paragraph{Hardware-oriented validation.}
Finally, QED+PEC should be validated using experimentally learned sparse Pauli--Lindblad noise models, either on real devices or in hardware-faithful simulations, in order to delineate the noise and circuit-volume window in which the protocol is genuinely useful.

\subsection{Closing remarks}
\label{sec:discussion_closing}

Taken together, QED+PEC offers a feedback-free framework for combining stabilizer post-selection with probabilistic cancellation of the residual logical channel. The ideal-measurement Iceberg benchmark shows the possible size of this advantage when detection is cheap. The noisy-extraction stress test identifies the main practical constraint: for global-check codes, syndrome extraction itself can dominate the sampling cost. Finding code families and hardware settings with sufficiently cheap detection is therefore the central next step.

\section*{Note added}
\label{sec:note_added}

In the course of completing this manuscript we became aware of a recent independent work that also studies the combination of error detection with error mitigation and PEC~\cite{33}. Our analysis is complementary in scope: we focus on the post-selected sparse-fault expansion, derive an explicit perturbative inverse channel at arbitrary fixed order, prove explicit sampling-cost and diamond-norm bounds, and analyze the detection-frequency trade-off and the Zeno-like cheap-detection mechanism through large-scale ideal-measurement Iceberg-code simulations, syndrome-noise stress tests, and analytically solvable toy models.

\bibliography{refs}

\onecolumngrid
\newpage
\appendix
\appendixtableofcontents
\newpage

\section{Additional details for the algorithmic \texorpdfstring{order-$K$}{order-K} QED+PEC construction}
\label{app:k_order_algorithmic_bound}

\subsection{One-block notation}

We first work within one block and suppress the block label $m$.  Let
$\mathcal K$ be the set of independent physical Pauli fault locations.  Each
location $i\in\mathcal K$ occurs with probability $w_i$, and a subset
$I\subseteq\mathcal K$ has exact Bernoulli probability
\begin{equation}
    \pi_I
    =
    w_I\prod_{j\in\mathcal K\setminus I}(1-w_j),
    \qquad
    w_I:=\prod_{i\in I}w_i .
\end{equation}
After propagation through the Clifford block, the fault branch acts as a
Pauli channel $\mathcal P_I(\cdot)=\tilde P_I(\cdot)\tilde P_I^\dagger$.
The branch is accepted by QED iff $\tilde P_I$ commutes with every stabilizer
generator.  We write
\begin{equation}
    \mathcal K_{\mathrm{pass}}
    :=
    \{I\subseteq\mathcal K:[\tilde P_I,S_a]=0\ \text{for all }a\}.
\end{equation}
For an input state already in the code space, an accepted Pauli branch remains
inside the code space, while a rejected branch is removed by the syndrome
projection.  Therefore the exact unnormalized accepted channel and exact
success probability are
\begin{equation}
    \bar N
    :=
    \sum_{I\in\mathcal K_{\mathrm{pass}}}\pi_I\mathcal P_I,
    \qquad
    p_{\mathrm{success}}
    :=
    \sum_{I\in\mathcal K_{\mathrm{pass}}}\pi_I.
\end{equation}
The exact normalized hardware residual channel is
\begin{equation}
    N_{\mathrm{reduce}}^{\mathrm{exact}}
    :=
    \frac{\bar N}{p_{\mathrm{success}}}.
    \label{eq:appendix_exact_reduced_algorithmic}
\end{equation}
This channel contains all accepted fault orders.

For $|I|\le K$, the algorithmic branch coefficient is
\begin{equation}
    a_I^{(K)}
    :=
    \operatorname{Trunc}_{\le K}\left[
        w_I\prod_{j\in\mathcal K\setminus I}(1-w_j)
    \right]
    =
    w_I
    \sum_{\ell=0}^{K-|I|}(-1)^\ell
    \sum_{\substack{J\subseteq\mathcal K\setminus I\\ |J|=\ell}}w_J .
    \label{eq:appendix_algorithmic_aI}
\end{equation}
The algorithm forms
\begin{equation}
    \bar N_K
    :=
    \sum_{\substack{I\in\mathcal K_{\mathrm{pass}}\\ |I|\le K}}
    a_I^{(K)}\mathcal P_I,
    \qquad
    P_{K,\mathrm{success}}
    :=
    \sum_{\substack{I\in\mathcal K_{\mathrm{pass}}\\ |I|\le K}}
    a_I^{(K)}.
\end{equation}
The Taylor-truncated reduced channel used for compiling the PEC table is
\begin{equation}
    \hat{N}_{\mathrm{reduce}}
    :=
    \operatorname{Trunc}_{\le K}\left[
        \frac{\bar N_K}{P_{K,\mathrm{success}}}
    \right]
    =
    \operatorname{id}+R_K.
    \label{eq:appendix_AK_definition}
\end{equation}
Finally the implemented quasiprobability channel is
\begin{equation}
    \widehat N_K^{-1}
    :=
    \operatorname{Collect}_P
    \operatorname{Trunc}_{\le K}\left[
        \sum_{r=0}^{K}(-1)^r R_K^{\circ r}
    \right]
    =
    \sum_P c_P^{(K)}\mathcal P,
    \qquad
    \gamma_K:=\sum_P|c_P^{(K)}|.
    \label{eq:appendix_algorithmic_inverse}
\end{equation}
Since each $\mathcal P$ is a unitary Pauli channel,
\begin{equation}
    \|\widehat N_K^{-1}\|_\diamond
    \le
    \sum_P|c_P^{(K)}|=\gamma_K.
    \label{eq:appendix_gamma_norm_bound}
\end{equation}

The one-block diamond-norm inequality and the multi-block composition bound
are established in the proof sketch of Theorem~\ref{thm:k_order_error_bound}
and are not repeated in this appendix.

\subsection{The algorithmic reduced channel is the Taylor truncation of the exact reduced channel}

\begin{lemma}[Identification of $\hat{N}_{\mathrm{reduce}}$]
\label{lem:AK_is_exact_taylor_truncation}
As a formal power series in the fault probabilities $\{w_i\}$,
\begin{equation}
    \hat{N}_{\mathrm{reduce}}
    =
    \operatorname{Trunc}_{\le K}
    \left[N_{\mathrm{reduce}}^{\mathrm{exact}}\right].
    \label{eq:AK_equals_exact_reduced_truncation}
\end{equation}
\end{lemma}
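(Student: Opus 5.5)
The plan is to work in the ring of formal power series in the fault probabilities $\{w_i\}$, with coefficients in the finite-dimensional space of superoperators spanned by the Pauli channels $\mathcal P_I$. Everything then reduces to two facts: $\operatorname{Trunc}_{\le K}$ commutes with the algebraic operations that relate $\bar N_K, P_{K,\mathrm{success}}$ to $\bar N, p_{\mathrm{success}}$, and the quotient $\bar N/p_{\mathrm{success}}$ is a well-defined formal power series.

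\textbf{Step 1.} Show that $\bar N_K \equiv \bar N$ and $P_{K,\mathrm{success}} \equiv p_{\mathrm{success}}$ modulo terms of total degree $\ge K+1$, i.e.\ $\operatorname{Trunc}_{\le K}\bar N=\bar N_K$ and $\operatorname{Trunc}_{\le K}p_{\mathrm{success}}=P_{K,\mathrm{success}}$. The branch probability $\pi_I=w_I\prod_{j\notin I}(1-w_j)$ is a polynomial whose lowest-degree monomial is $w_I$, of degree $|I|$. Hence every branch with $|I|>K$ contributes only at degree $\ge K+1$ and is killed by truncation. For $|I|\le K$, the truncation of $\pi_I$ is exactly $a_I^{(K)}$ by Eq.~\eqref{eq:appendix_algorithmic_aI}. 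The acceptance condition $I\in\mathcal K_{\mathrm{pass}}$ does not depend on the $w$'s, so truncation passes through the restricted sum over $\mathcal K_{\mathrm{pass}}$ termwise. This gives $\operatorname{Trunc}_{\le K}\bar N=\bar N_K$ and, taking traces (coefficient sums), $\operatorname{Trunc}_{\le K}p_{\mathrm{success}}=P_{K,\mathrm{success}}$.

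\textbf{Step 2.} Establish the algebraic lemma that $\operatorname{Trunc}_{\le K}$ is a ring homomorphism from formal power series onto the quotient by the ideal $\mathfrak m^{K+1}$ of series with all monomials of degree $\ge K+1$. That is, $\operatorname{Trunc}_{\le K}(fg)=\operatorname{Trunc}_{\le K}(\operatorname{Trunc}_{\le K}f\cdot\operatorname{Trunc}_{\le K}g)$, and the same holds for scalar-times-superoperator products. Since $\varnothing\in\mathcal K_{\mathrm{pass}}$, the constant term of $p_{\mathrm{success}}$ is $1$. So $p_{\mathrm{success}}$ is a unit in the formal power series ring, with inverse $\sum_{r\ge0}(1-p_{\mathrm{success}})^r$, which is well defined because $1-p_{\mathrm{success}}\in\mathfrak m$. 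The same holds for $P_{K,\mathrm{success}}$, since $a_\varnothing^{(K)}$ also has constant term $1$. A unit in $R$ maps to a unit in $R/\mathfrak m^{K+1}$, and inversion commutes with the quotient map. Therefore $\operatorname{Trunc}_{\le K}(1/p_{\mathrm{success}})=\operatorname{Trunc}_{\le K}(1/P_{K,\mathrm{success}})$, because the two series agree modulo $\mathfrak m^{K+1}$ by Step 1.

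\textbf{Step 3.} Combine the two steps:
\begin{equation*}
\operatorname{Trunc}_{\le K}\!\left[\frac{\bar N}{p_{\mathrm{success}}}\right]
=\operatorname{Trunc}_{\le K}\!\left[\bar N_K\cdot \frac{1}{P_{K,\mathrm{success}}}\right]
=\hat N_{\mathrm{reduce}},
\end{equation*}
which is Eq.~\eqref{eq:AK_equals_exact_reduced_truncation}.

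The main obstacle is Step 2, specifically making precise that $N_{\mathrm{reduce}}^{\mathrm{exact}}$, a rational function of the $w_i$, is being read as its formal power series. The lemma is stated at the formal level, so it suffices to note that the denominator has constant term $1$, which makes the series expansion legitimate. Analytic convergence near $w=0$ is not needed for this statement; it matters only later, in the corollary.
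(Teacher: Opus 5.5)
Your proposal is correct and follows essentially the same route as the paper. Both arguments first identify $\bar N_K$ and $P_{K,\mathrm{success}}$ as the branch-by-branch degree-$K$ truncations of $\bar N$ and $p_{\mathrm{success}}$, and then use that $p_{\mathrm{success}}$ has constant term $1$, so the coefficients of $1/p_{\mathrm{success}}$ through degree $K$ depend only on those of $p_{\mathrm{success}}$. Your quotient-ring formulation (reduction modulo monomials of degree $\ge K+1$ is a ring homomorphism that preserves units) is simply a more explicit justification of the paper's one-line claim about $1/p$.
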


\begin{proof}
For $|I|\le K$, Eq.~\eqref{eq:appendix_algorithmic_aI} is precisely
$\operatorname{Trunc}_{\le K}[\pi_I]$.  If $|I|>K$, the exact branch
probability $\pi_I$ has total degree at least $|I|>K$, so
$\operatorname{Trunc}_{\le K}[\pi_I]=0$.  Hence
\begin{equation}
    \bar N_K=\operatorname{Trunc}_{\le K}[\bar N],
    \qquad
    P_{K,\mathrm{success}}
    =
    \operatorname{Trunc}_{\le K}[p_{\mathrm{success}}].
\end{equation}
Moreover $p_{\mathrm{success}}(0)=1$, because the no-fault branch is accepted.
For any scalar formal series $p=1+O(w)$, the coefficients through degree $K$ of
$1/p$ depend only on the coefficients through degree $K$ of $p$.  Therefore
\begin{align}
    \operatorname{Trunc}_{\le K}
    \left[\frac{\bar N}{p_{\mathrm{success}}}\right]
    &=
    \operatorname{Trunc}_{\le K}
    \left[
        \frac{\operatorname{Trunc}_{\le K}\bar N}
             {\operatorname{Trunc}_{\le K}p_{\mathrm{success}}}
    \right] \\
    &=
    \operatorname{Trunc}_{\le K}
    \left[
        \frac{\bar N_K}{P_{K,\mathrm{success}}}
    \right]
    =\hat{N}_{\mathrm{reduce}} .
\end{align}
This proves the claim.
\end{proof}

Lemma~\ref{lem:AK_is_exact_taylor_truncation} is where the proof differs from
the earlier exact-$\pi_I$ retained-channel argument.  The channel
normalized by exact branch probabilities over $|I|\le K$ is not used here.

\subsection{The implemented inverse is a formal inverse through \texorpdfstring{order $K$}{order K}}

\begin{lemma}[Formal inverse property]
\label{lem:formal_inverse_property}
Let $\hat{N}_{\mathrm{reduce}}=\operatorname{id}+R_K$ be defined by
Eq.~\eqref{eq:appendix_AK_definition}.  Then
\begin{equation}
    \operatorname{Trunc}_{\le K}
    \left[
        \widehat N_K^{-1}\circ\hat{N}_{\mathrm{reduce}}
    \right]
    =
    \operatorname{id}.
    \label{eq:formal_inverse_to_order_K}
\end{equation}
Consequently,
$\widehat N_K^{-1}\circ\hat{N}_{\mathrm{reduce}}-\operatorname{id}$ has no monomial of
total degree $0,1,\ldots,K$.
\end{lemma}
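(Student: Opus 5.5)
Everything will be treated as formal power series in the weights $\{w_i\}$, with coefficients in the (noncommutative) algebra of superoperators spanned by Pauli channels. $\operatorname{Trunc}_{\le K}$ is the projection onto monomials of total degree at most $K$. The key fact is that this truncation is compatible with composition: for any two series $A,B$,
\[
\operatorname{Trunc}_{\le K}[A\circ B]=\operatorname{Trunc}_{\le K}\bigl[\operatorname{Trunc}_{\le K}[A]\circ\operatorname{Trunc}_{\le K}[B]\bigr].
\]
This holds because a monomial of degree $>K$ in either factor can only produce monomials of degree $>K$ in the product; total degree is additive under composition, since the weights are commuting scalars.

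The first step is to record that $R_K$ has no degree-$0$ term, i.e.\ $R_K=O(w)$. By Eq.~\eqref{eq:appendix_AK_definition}, $R_K=\operatorname{Trunc}_{\le K}[\bar N_K/P_{K,\mathrm{success}}]-\operatorname{id}$. At zero weights only $I=\varnothing$ contributes, with $a_\varnothing^{(K)}=1+O(w)$ and $\mathcal P_\varnothing=\operatorname{id}$. Hence $\bar N_K/P_{K,\mathrm{success}}$ has constant term $\operatorname{id}$, and $R_K$ has no constant term. Consequently $R_K^{\circ r}$ contains only monomials of degree at least $r$. In particular, $\operatorname{Trunc}_{\le K}[R_K^{\circ(K+1)}]=0$.

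The second step is the telescoping identity. With $S:=\sum_{r=0}^{K}(-1)^rR_K^{\circ r}$, one has exactly
\[
S\circ(\operatorname{id}+R_K)=\operatorname{id}+(-1)^K R_K^{\circ(K+1)}.
\]
Since $\widehat N_K^{-1}=\operatorname{Trunc}_{\le K}[S]$, the step $\operatorname{Collect}_P$ only regroups identical Pauli channels and does not change the superoperator. Applying the compatibility fact then gives
\[
\operatorname{Trunc}_{\le K}[\widehat N_K^{-1}\circ\hat N_{\mathrm{reduce}}]
=\operatorname{Trunc}_{\le K}[S\circ(\operatorname{id}+R_K)]
=\operatorname{id}+(-1)^K\operatorname{Trunc}_{\le K}[R_K^{\circ(K+1)}]
=\operatorname{id}.
\]
Here we also used $\operatorname{Trunc}_{\le K}[\hat N_{\mathrm{reduce}}]=\hat N_{\mathrm{reduce}}$, which holds because $\hat N_{\mathrm{reduce}}$ is itself defined as a truncation. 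The consequence in the statement is a restatement: the difference $\widehat N_K^{-1}\circ\hat N_{\mathrm{reduce}}-\operatorname{id}$ is a polynomial whose degree-$\le K$ part vanishes, so it contains no monomials of degree $0,\dots,K$.

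The main point needing care is justifying the compatibility of truncation with composition in this noncommutative setting, and making sure the repeated labels (e.g.\ $w_i^2$) arising in $R_K^{\circ r}$ are counted by total degree. Both are handled by the grading argument above: the coefficients are scalars that commute with superoperators, so the degree of a product monomial is the sum of the degrees of its factors. Noncommutativity of the Pauli-channel coefficients plays no role because we never reorder the superoperator factors.
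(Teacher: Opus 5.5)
Your proof is correct and follows essentially the same route as the paper. Both arguments rest on $R_K$ having no constant term, so every $R_K^{\circ r}$ with $r>K$ is removed by $\operatorname{Trunc}_{\le K}$. The only difference is presentational: you use the finite telescoping identity $S\circ(\operatorname{id}+R_K)=\operatorname{id}+(-1)^K R_K^{\circ(K+1)}$ and state the compatibility of truncation with composition explicitly, whereas the paper invokes the infinite formal Neumann series $(\operatorname{id}+R_K)^{-1}=\sum_{r\ge0}(-1)^rR_K^{\circ r}$ and uses that compatibility implicitly.
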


\begin{proof}
The channel $R_K$ has no degree-zero term, because $\hat{N}_{\mathrm{reduce}}$ is a
trace-preserving Taylor truncation of a channel that reduces to
$\operatorname{id}$ at zero noise.  In the ring of formal channel-valued power series,
\begin{equation}
    (\operatorname{id}+R_K)^{-1}
    =
    \sum_{r=0}^{\infty}(-1)^rR_K^{\circ r}.
\end{equation}
Terms with $r>K$ have total degree at least $K+1$.  Therefore the degree-$K$
truncation of this formal inverse is exactly the expression implemented in
Eq.~\eqref{eq:appendix_algorithmic_inverse}.  Multiplying by
$\hat{N}_{\mathrm{reduce}}=\operatorname{id}+R_K$ and truncating through degree $K$ gives
Eq.~\eqref{eq:formal_inverse_to_order_K}.
\end{proof}

For numerical certification, one does not need to manipulate the diamond norm
symbolically.  After collecting Pauli channels, write
\begin{equation}
    \widehat N_K^{-1}\circ\hat{N}_{\mathrm{reduce}}-\operatorname{id}
    =
    \sum_P b_P^{(K)}\mathcal P .
\end{equation}
Then
\begin{equation}
    \zeta_K
    :=
    \left\|
        \widehat N_K^{-1}\circ\hat{N}_{\mathrm{reduce}}-\operatorname{id}
    \right\|_\diamond
    \le
    \sum_P |b_P^{(K)}|,
    \label{eq:zeta_l1_certificate_appendix}
\end{equation}
and the right-hand side is obtained by the same Pauli-channel collection used
in the PEC preprocessing.  Lemma~\ref{lem:formal_inverse_property} implies
that this coefficient sum begins at total degree $K+1$.

\subsection{Perturbative scaling}
\label{app:perturbative_scaling}

This subsection proves Corollary~\ref{cor:perturbative_scaling}.  Let
$W:=\sum_{i\in\mathcal K}w_i$ and assume that the success probability
$p_{\mathrm{success}}$ stays bounded away from zero as $W\to0$.  We
estimate the three certificates separately.

\paragraph{Estimate for $\eta_K$.}
Scale all fault probabilities by a single parameter $t$, defining
$N_{\mathrm{reduce}}^{\mathrm{exact}}(t):=N_{\mathrm{reduce}}^{\mathrm{exact}}(\{tw_i\})$.
By Lemma~\ref{lem:AK_is_exact_taylor_truncation},
\begin{equation}
    \hat{N}_{\mathrm{reduce}}
    =
    \sum_{q=0}^{K}\frac{1}{q!}
    \left.\frac{d^q}{dt^q}N_{\mathrm{reduce}}^{\mathrm{exact}}(t)
    \right|_{t=0}.
\end{equation}
The integral Taylor remainder evaluated at $t=1$ gives
\begin{equation}
    \eta_K
    \le
    \frac{1}{K!}
    \int_0^1(1-t)^K
    \left\|
        \frac{d^{K+1}}{dt^{K+1}}
        N_{\mathrm{reduce}}^{\mathrm{exact}}(t)
    \right\|_\diamond dt.
    \label{eq:eta_taylor_remainder_appendix}
\end{equation}
The integrand is a sum of branch contributions whose $w$-degree is at
least $K+1$, so the right-hand side is $O(W^{K+1})$ whenever
$p_{\mathrm{success}}$ stays bounded away from zero on the segment
$\{tw_i:0\le t\le1\}$.

\paragraph{Estimate for $\zeta_K$.}
By Lemma~\ref{lem:formal_inverse_property}, the operator
$\widehat N_K^{-1}\circ\hat{N}_{\mathrm{reduce}}-\operatorname{id}$ has
no monomials of total $w$-degree $\le K$ when expanded in the variables
$\{w_i\}$.  Collecting Pauli channels and using
Eq.~\eqref{eq:zeta_l1_certificate_appendix},
\begin{equation}
    \zeta_K
    \le
    \sum_P|b_P^{(K)}|=O(W^{K+1}).
\end{equation}

\paragraph{Estimate for $\gamma_K$.}
Since $R_K=\hat{N}_{\mathrm{reduce}}-\operatorname{id}$ has every Pauli
coefficient of order at least $W$, the Neumann expansion
\begin{equation}
    \widehat N_K^{-1}
    =
    \operatorname{id}+\sum_{r=1}^{K}(-1)^rR_K^{\circ r}
\end{equation}
gives an identity coefficient of $1$ plus a sum of nonidentity terms
each of order at least $W$.  Therefore
\begin{equation}
    \gamma_K
    =
    \sum_P|c_P^{(K)}|
    =
    1+O(W).
\end{equation}

\paragraph{Combined estimate.}
The one-block inequality in Theorem~\ref{thm:k_order_error_bound} gives
\begin{equation}
    \epsilon_K
    =
    \zeta_K+\gamma_K\eta_K .
\end{equation}
Using the estimates above,
\[
    \eta_K=O(W^{K+1}),
    \qquad
    \zeta_K=O(W^{K+1}),
    \qquad
    \gamma_K=1+O(W),
\]
we obtain
\begin{equation}
    \epsilon_K=O(W^{K+1}).
\end{equation}
Restoring block labels and using the multi-block bound in
Theorem~\ref{thm:k_order_error_bound}, whenever
$\sum_m\epsilon_{m,K}\ll 1$,
\begin{equation}
    \left\|
        \mathcal C_{\mathrm{QED+PEC}}^{(K)}
        -\mathcal C_{\mathrm{ideal}}
    \right\|_\diamond
    =
    O\left(
        \sum_{m=1}^M W_m^{K+1}
    \right),
\end{equation}
which proves Corollary~\ref{cor:perturbative_scaling}.

\subsection{Implementation remarks}
\label{app:implementation_remarks}

The fixed-order construction is implemented using sparse branch tables rather
than a full Pauli-channel table over the $4^n$ Pauli group.

\begin{enumerate}
    \item \textbf{Sparse branch representation.}
    Each retained branch is stored by its fault-label subset
    $I\subseteq\mathcal K$ with $|I|\le K$, its polynomial coefficient
    $a_I^{(K)}$, and its propagated symplectic Pauli string $\tilde P_I$.
    The algorithm never enumerates fault subsets of size larger than $K$.

    \item \textbf{Stabilizer acceptance check.}
    A retained branch is accepted exactly when $\tilde P_I$ commutes with all
    stabilizer generators.  In the Clifford setting this is checked by
    symplectic inner products, so the acceptance test is independent of the
    Hilbert-space dimension.

    \item \textbf{Normalization and inverse coefficients.}
    After accepted branches are collected, the truncated normalized channel
    $\hat N_{\mathrm{reduce}}=\operatorname{id}+R_K$ is formed by polynomial
    arithmetic through total degree $K$.  The implemented inverse is the
    truncated Neumann series
    \[
        \operatorname{Trunc}_{\le K}
        \left[
            \sum_{r=0}^K(-1)^rR_K^{\circ r}
        \right].
    \]

    \item \textbf{PEC sampling table.}
    Identical propagated Pauli channels are collected to obtain coefficients
    $c_P^{(K)}$.  The table stores
    \[
        \left\{
            \left(P,\frac{|c_P^{(K)}|}{\gamma_K},
            \operatorname{sgn}(c_P^{(K)})\right):
            c_P^{(K)}\ne 0
        \right\},
        \qquad
        \gamma_K=\sum_P|c_P^{(K)}|.
    \]

    \item \textbf{Practical checks.}
    In numerical use, one can verify the table by checking that the collected
    coefficients of
    $\widehat N_K^{-1}\circ \hat N_{\mathrm{reduce}}-\operatorname{id}$
    vanish through total degree $K$, as guaranteed by
    Lemma~\ref{lem:formal_inverse_property}.
\end{enumerate}

\section{Toy-model algebra checks}
\label{app:toy_model}

The main physical derivation of the two toy models is given in
Sec.~\ref{sec:toy_model}.  This appendix records the intermediate algebraic
checks and Taylor expansions used to obtain Eqs.~\eqref{eq:toy_totalA},
\eqref{eq:toy_totalB}, and \eqref{eq:toy_purePEC_bound}.  Keeping these
details here makes the coefficients reproducible without forcing the main text
to carry every expansion step.

\subsection{Model A: QED+PEC}
\label{app:toy:modelA}

\subsubsection{Depolarizing dynamics and single-round survival probability}
\label{app:toy:dynamics}
Consider an initial pure state $\rho_0$ supported entirely on the code subspace,
with projector $P$ satisfying $\mathrm{Tr}(P)=2$ and $P\rho_0 P=\rho_0$.
The system evolves under a global depolarizing Lindbladian
\begin{equation}
\frac{d\rho}{dt}=\mathcal{L}(\rho)
=\gamma\!\left(\frac{\mathrm{Tr}(\rho)}{N}\,I-\rho\right),
\label{eq:app_lindblad}
\end{equation}
whose exact solution over a time interval $\tau$ is
\begin{equation}
\rho(\tau)=e^{-\gamma\tau}\rho_0
+\bigl(1-e^{-\gamma\tau}\bigr)\frac{I}{N}\,.
\label{eq:app_rho_tau}
\end{equation}
Defining the depolarizing probability $p=1-e^{-\gamma\tau}$, this becomes
\begin{equation}
\rho(\tau)=(1-p)\,\rho_0+\frac{p}{N}\,I\,.
\label{eq:app_rho_p}
\end{equation}

At the end of each interval, we apply the code-subspace projector $P$.
Using $P\rho_0 P=\rho_0$ and $\mathrm{Tr}(P)=2$, the unnormalized
post-selected state is
\begin{equation}
\tilde{\rho}_P(\tau)=P\rho(\tau)P
=(1-p)\,\rho_0+\frac{p}{N}\,P\,,
\label{eq:app_rhoP}
\end{equation}
and the single-round survival (post-selection success) probability is
\begin{equation}
p_{\mathrm{succ}}
=\mathrm{Tr}\!\bigl(\tilde{\rho}_P(\tau)\bigr)
=1-p+\frac{2p}{N}
=1-p\!\left(1-\frac{2}{N}\right).
\label{eq:app_psucc}
\end{equation}

\subsubsection{Normalized post-selected state and the effective Pauli channel}
\label{app:toy:pauli_channel}

To construct the PEC inverse, we need the normalized post-selected state
expressed as a Pauli channel acting on $\rho_0$.
Within the two-dimensional code subspace, the projector admits the
decomposition
\begin{equation}
P = \frac{1}{2}\bigl(\rho_0 + X\rho_0 X + Y\rho_0 Y + Z\rho_0 Z\bigr)
\equiv \frac{1}{2}\rho_0 + \mathcal{D}(\rho_0)\,,
\label{eq:app_P_decomp}
\end{equation}
where $\mathcal{D}(\rho_0)=\frac{1}{2}(X\rho_0 X+Y\rho_0 Y+Z\rho_0 Z)$
denotes the completely dephasing (off-diagonal killing) part.
Substituting into Eq.~\eqref{eq:app_rhoP}:
\begin{equation}
\tilde{\rho}_P(\tau)
=\left(1-p+\frac{p}{2N}\right)\rho_0
+\frac{p}{N}\,\mathcal{D}(\rho_0)\,.
\label{eq:app_rhoP_pauli}
\end{equation}
After normalization by $p_{\mathrm{succ}}$, this defines a CPTP Pauli channel
$\mathcal{E}$ with coefficients
\begin{equation}
A=\frac{1-p+\frac{p}{2N}}{p_{\mathrm{succ}}}\,,
\qquad
B=\frac{\frac{p}{2N}}{p_{\mathrm{succ}}}\,,
\label{eq:app_AB}
\end{equation}
where $A$ is the weight of the identity operation and $B$ is the weight of
each nontrivial Pauli ($X$, $Y$, $Z$).

\subsubsection{Single-round PEC quasiprobability overhead}
\label{app:toy:pec}

In the Pauli transfer matrix (PTM) representation, the effective channel
$\mathcal{E}$ has degenerate eigenvalues
$\lambda_x=\lambda_y=\lambda_z\equiv\lambda$, with
\begin{equation}
\lambda = A - B
= \frac{1-p}{p_{\mathrm{succ}}}
= \frac{1-p}{1-p\!\left(1-\frac{2}{N}\right)}\,.
\label{eq:app_lambda}
\end{equation}
The PEC quasiprobability overhead for inverting a single round is
$\gamma_{\mathrm{PEC}}=\tfrac{3}{2}\lambda^{-1}-\tfrac{1}{2}$.
Inverting Eq.~\eqref{eq:app_lambda}:
\begin{equation}
\lambda^{-1}
=\frac{1-p+\frac{2p}{N}}{1-p}
=1+\frac{2}{N}\,\frac{p}{1-p}\,.
\label{eq:app_lambda_inv}
\end{equation}
Define the auxiliary variable
\begin{equation}
x \;:=\; \frac{p}{1-p}
\;=\; e^{\gamma\tau}-1\,,
\label{eq:app_x_def}
\end{equation}
where the last equality follows from $p=1-e^{-\gamma\tau}$.
Substituting into the PEC overhead:
\begin{equation}
\gamma_{\mathrm{PEC}}
= \frac{3}{2}\!\left(1+\frac{2}{N}\,x\right)-\frac{1}{2}
= 1+\frac{3}{N}\bigl(e^{\gamma\tau}-1\bigr).
\label{eq:app_gamma_pec}
\end{equation}

\subsubsection{Total sample complexity after \texorpdfstring{$T/\tau$}{T/tau} rounds}
\label{app:toy:total}

After $T/\tau$ consecutive rounds of noise-then-project, the total
post-selection cost and PEC variance overhead are
\begin{equation}
S_1 = \bigl(p_{\mathrm{succ}}\bigr)^{-T/\tau},
\qquad
S_2 = \bigl(\gamma_{\mathrm{PEC}}^{2}\bigr)^{T/\tau}.
\label{eq:app_S1S2_def}
\end{equation}
The total sample complexity scales as $C_{\mathrm{total}}\propto S_1\,S_2$.
We evaluate $\ln(S_1 S_2)=\ln S_1+\ln S_2$ by expanding each factor to
$\mathcal{O}(\tau^2)$ in the single-round logarithm, since multiplication
by $T/\tau$ promotes the $\mathcal{O}(\tau^2)$ correction to an
$\mathcal{O}(\tau)$ contribution in the total exponent.

\paragraph{Post-selection cost $\ln S_1$.}
Write $c=1-2/N$, so that $p_{\mathrm{succ}}=1-c\,p=1-c(1-e^{-\gamma\tau})$.
Expanding $1-e^{-\gamma\tau}=\gamma\tau-\tfrac{1}{2}\gamma^2\tau^2
+\mathcal{O}(\tau^3)$ and using
$\ln(1-u)\approx -u-\tfrac{1}{2}u^2$:
\begin{equation}
\ln p_{\mathrm{succ}}
\approx -c\!\left(\gamma\tau-\tfrac{1}{2}\gamma^2\tau^2\right)
-\tfrac{1}{2}c^2(\gamma\tau)^2
= -c\,\gamma\tau+\tfrac{1}{2}c(1-c)\,\gamma^2\tau^2.
\label{eq:app_ln_psucc}
\end{equation}
Multiplying by $-T/\tau$ and substituting $c(1-c)=\frac{2}{N}-\frac{4}{N^2}$:
\begin{equation}
\ln S_1
= \gamma T\!\left(1-\frac{2}{N}\right)
+\gamma^2 T\tau\!\left(-\frac{1}{N}+\frac{2}{N^2}\right).
\label{eq:app_lnS1}
\end{equation}

\paragraph{PEC variance overhead $\ln S_2$.}
Using $e^{\gamma\tau}-1=\gamma\tau+\tfrac{1}{2}\gamma^2\tau^2
+\mathcal{O}(\tau^3)$:
\begin{align}
\ln\!\bigl(\gamma_{\mathrm{PEC}}^{2}\bigr)
&= 2\ln\!\left[1+\frac{3}{N}\!\left(\gamma\tau
  +\tfrac{1}{2}\gamma^2\tau^2\right)\right]
\notag\\
&\approx 2\left[\frac{3}{N}\!\left(\gamma\tau
  +\tfrac{1}{2}\gamma^2\tau^2\right)
  -\frac{1}{2}\!\left(\frac{3}{N}\gamma\tau\right)^{\!2}\right]
\notag\\
&= \frac{6\gamma\tau}{N}
  +\gamma^2\tau^2\!\left(\frac{3}{N}-\frac{9}{N^2}\right).
\label{eq:app_ln_gamma2}
\end{align}
Multiplying by $T/\tau$:
\begin{equation}
\ln S_2
= \frac{6\gamma T}{N}
+\gamma^2 T\tau\!\left(\frac{3}{N}-\frac{9}{N^2}\right).
\label{eq:app_lnS2}
\end{equation}

\subsubsection{Combined result}
\label{app:toy:combined}

Adding Eqs.~\eqref{eq:app_lnS1} and \eqref{eq:app_lnS2}:
\begin{equation}
\ln C_{\mathrm{QED{+}PEC}}
= \gamma T\!\left(1+\frac{4}{N}\right)
+\gamma^2 T\tau\!\left(\frac{2}{N}-\frac{7}{N^2}\right).
\label{eq:app_lnCtotal}
\end{equation}
In exponential form,
\begin{equation}
C_{\mathrm{QED{+}PEC}}
\approx
\exp\!\left[\gamma T\!\left(1+\frac{4}{N}\right)\right]
\;\exp\!\left[\gamma^2 T\tau\!\left(\frac{2}{N}-\frac{7}{N^2}\right)\right].
\label{eq:app_Ctotal_exp}
\end{equation}
For $N\ge 4$, the coefficient $2/N-7/N^2$ is strictly positive, so the
$\mathcal{O}(\gamma^2 T\tau)$ correction is a penalty that grows with
$\tau$ and vanishes as $\tau\to 0$.

\subsection{Model B: Pure subspace PEC}
\label{app:toy:modelB}

\subsubsection{Setup}
\label{app:toy:modelB_setup}

To provide a baseline for comparison, we now analyze a system that is
restricted to the $2$-dimensional {physical qubit space},
with no ancilla degrees of freedom and hence no possibility of error
detection.
The only error-mitigation tool available is PEC, applied periodically
at intervals $\tau$.

The depolarizing noise within the $2$-dimensional space is governed by

\begin{equation}
\mathcal{L}(\rho)=\gamma\!\left(\frac{I}{2}-\rho\right),
\label{eq:appB_lindblad}
\end{equation}

whose exact solution over an interval $\tau$ is
\begin{equation}
\rho(\tau)=(1-p)\,\rho_0+p\,\frac{I}{2}\,,
\qquad p=1-e^{-\gamma\tau}\,.
\label{eq:appB_rho}
\end{equation}
Since the system never leaves the physical space, there is no
post-selection step and $S_1=1$ identically.

\subsubsection{Single-round PEC overhead}
\label{app:toy:modelB_pec}

Using the Pauli decomposition
$\frac{I}{2}=\frac{1}{4}(\rho_0+X\rho_0 X+Y\rho_0 Y+Z\rho_0 Z)$,
the noisy state can be written as the Pauli channel
\begin{equation}
\mathcal{E}_B(\rho_0)
=\!\left(1-\tfrac{3}{4}p\right)\rho_0
+\tfrac{p}{4}\bigl(X\rho_0 X+Y\rho_0 Y+Z\rho_0 Z\bigr).
\label{eq:appB_pauli}
\end{equation}
The PTM eigenvalues are $\lambda_0=1$ and
$\lambda_x=\lambda_y=\lambda_z=1-p=e^{-\gamma\tau}$, giving the
single-round PEC overhead
\begin{equation}
\gamma_{\mathrm{PEC}}^{(B)}
=\tfrac{3}{2}\,e^{\gamma\tau}-\tfrac{1}{2}\,.
\label{eq:appB_gamma}
\end{equation}

\subsubsection{Total PEC complexity with periodic cancellation}
\label{app:toy:modelB_total}

Applying PEC every $\tau$ over a total time $T$ gives
$C_{\mathrm{pure\;PEC}}=(\gamma_{\mathrm{PEC}}^{(B)\,2})^{T/\tau}$.
Expanding the single-round logarithm to $\mathcal{O}(\tau^2)$:
\begin{align}
\ln\!\bigl(\gamma_{\mathrm{PEC}}^{(B)\,2}\bigr)
&= 2\ln\!\left(\tfrac{3}{2}e^{\gamma\tau}-\tfrac{1}{2}\right)
\notag\\
&= 2\ln\!\left(1+\tfrac{3}{2}\gamma\tau
  +\tfrac{3}{4}\gamma^2\tau^2+\cdots\right)
\notag\\
&\approx 2\left[\tfrac{3}{2}\gamma\tau
  +\tfrac{3}{4}\gamma^2\tau^2
  -\tfrac{1}{2}\!\left(\tfrac{3}{2}\gamma\tau\right)^{\!2}\right]
\notag\\
&= 3\gamma\tau - \tfrac{3}{4}\gamma^2\tau^2.
\label{eq:appB_ln_gamma2}
\end{align}
Multiplying by $T/\tau$:
\begin{equation}
\ln C_{\mathrm{pure\;PEC}}^{(\tau)}
= 3\gamma T - \tfrac{3}{4}\gamma^2 T\tau\,.
\label{eq:appB_total}
\end{equation}
Note that the $\mathcal{O}(\gamma^2 T\tau)$ correction is
\emph{negative}: smaller $\tau$ (more frequent PEC) \emph{increases}
the total cost.
This is the opposite sign compared with Model~A
[Eq.~\eqref{eq:app_lnCtotal}], and reflects the well-known fact that
splitting a noise channel into many small pieces and inverting each one
separately is suboptimal for pure PEC~\cite{IBM2022}.

\subsubsection{Optimal single-shot PEC (\texorpdfstring{$\tau=T$}{tau=T})}
\label{app:toy:modelB_single}

The optimal pure-PEC strategy is to accumulate all noise over the full
duration $T$ and apply PEC only once at the end.
Setting $\tau=T$ (a single round) gives
\begin{equation}
\gamma_{\mathrm{PEC}}^{(B)}\big|_{\tau=T}
=\tfrac{3}{2}\,e^{\gamma T}-\tfrac{1}{2}\,,
\label{eq:appB_gamma_single}
\end{equation}
and the total sample complexity is
\begin{equation}
C_{\mathrm{pure\;PEC}}^{(\tau=T)}
=\left(\tfrac{3}{2}\,e^{\gamma T}-\tfrac{1}{2}\right)^{2}
\;\approx\; \tfrac{9}{4}\,e^{2\gamma T}
\quad\text{for }\gamma T\gg 1\,.
\label{eq:appB_single_cost}
\end{equation}
This serves as a lower bound on the pure-PEC sample complexity.


\section{Quantum Fisher Information Analysis}
\label{sec:qfi}

In this section, we address a fundamental question: under local noise, can subspace encoding reduce the sampling cost of expectation-value estimation at the information-theoretic level? Estimating $\mathrm{Tr}(\rho X)$ from repeated measurements on a noisy state is inherently a parameter estimation problem, and the quantum Cram\'{e}r--Rao bound (QCRB) sets the ultimate lower limit on the required number of samples~\cite{braunstein1994,optimal_measurement_2010}. In this context, information has a direct operational meaning: the fewer samples needed to reach a target precision, the more information the measurements carry about the parameter.

Consider an $n$-qubit system whose $2^n$-dimensional Hilbert space $\mathcal{H}$ contains a $2^k$-dimensional code subspace $\mathcal{C}$ defined by the projector $\Pi$. The initial state $\rho = \Pi\rho\Pi$ lies within $\mathcal{C}$, and we wish to estimate $\mathrm{Tr}(\rho X)$ for an observable $X$ supported on $\mathcal{C}$, after the state passes through a local noise channel $\mathcal{N}$. As shown in Ref.~\cite{optimal_measurement_2010}, the QCRB implies that the minimum number of samples required to estimate $\mathrm{Tr}(\rho X)$ to mean-squared error $\varepsilon^2$ satisfies
\begin{equation}
N_{\min} \;\geq\; \frac{\mathrm{Tr}\bigl(X\,\mathcal{M}^{-1}(X)\bigr)}{\varepsilon^2}\,,
\label{eq:qcrb_sampling}
\end{equation}
where $\mathcal{M}$ is a superoperator determined jointly by the noise channel and the subspace structure,
\begin{equation}
\mathcal{M}(\cdot) = \Pi \circ \mathcal{N}^\dagger \circ \mathcal{R}_{\mathcal{N}(\rho)}^{-1} \circ \mathcal{N} \circ \Pi(\cdot)\,,
\label{eq:superop_M}
\end{equation}
with $\mathcal{R}_{\mathcal{N}(\rho)}(\cdot) = \frac{1}{2}\bigl\{\mathcal{N}(\rho),\;\cdot\;\bigr\}$ the symmetric logarithmic derivative (SLD) superoperator of the noisy state. In the absence of noise ($\mathcal{N} = \mathrm{id}$), $\mathcal{M}$ reduces to the identity within $\mathcal{C}$, and Eq.~\eqref{eq:qcrb_sampling} recovers the standard shot-noise limit $N_{\min} \geq \mathrm{Tr}(X^2)/\varepsilon^2$.

The comparison between the noisy and noiseless bounds motivates the \emph{information contraction rate},
\begin{equation}
r = \frac{\mathrm{Tr}(X^2)}{\mathrm{Tr}\bigl(X\,\mathcal{M}^{-1}(X)\bigr)}\,.
\label{eq:contraction_rate}
\end{equation}
Since noise can only degrade information, $r \le 1$: the noise compresses the extractable information, inflating the sampling cost by a factor of $1/r$. Equation~\eqref{eq:contraction_rate} provides clear physical intuition about the role of the subspace. Without encoding ($\Pi = I$), the eigenvalues of $\mathcal{M}$ decay exponentially with circuit depth under local noise, driving $r \to 0$ and causing the exponential sampling explosion of pure PEC. By contrast, a well-chosen nontrivial $\Pi$ reshapes the eigenspectrum of $\mathcal{M}$, turning the complement of $\mathcal{C}$ into an absorbing sink for leaked population and dramatically slowing the decay of $r$. However, the eigenstructure of $\mathcal{M}$ depends nonlinearly on $\Pi$, precluding closed-form optimization and motivating a numerical approach.

The exact computation of $\mathcal{M}$ requires full diagonalization of the $2^n \times 2^n$ density matrix, limiting us to small systems. Within this window, we compare three subspace choices: (i)~the \emph{best-found} subspace from numerical optimization on the Stiefel manifold; (ii)~the $[[n,1]]$ repetition code subspace; and (iii)~the trivial baseline that keeps the original logical subspace unchanged.

\begin{figure}[t]
    \centering
    \includegraphics[width=0.9\columnwidth]{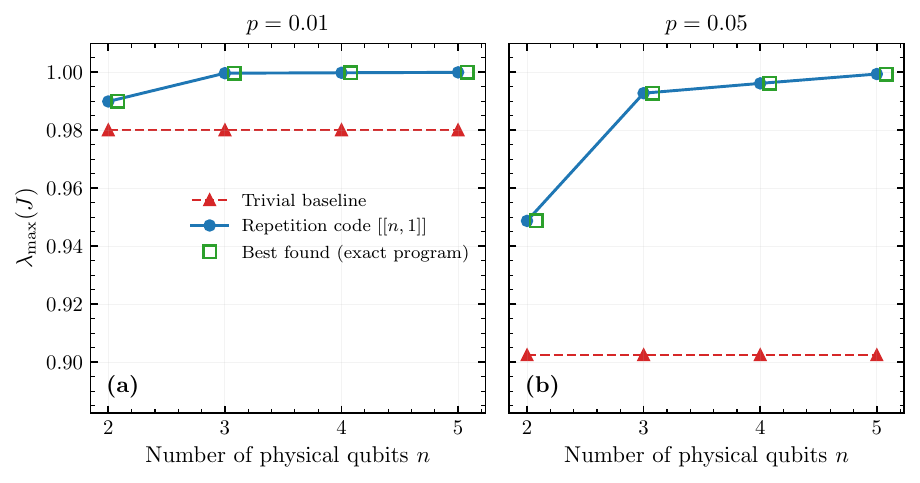}
    \caption{Comparison of the maximal information contraction rate $\lambda_{\max}(J)$ for the trivial baseline, the $[[n,1]]$ repetition code, and the numerically best-found subspace, at two local depolarizing noise strengths. Across all tested system sizes, the repetition code is nearly indistinguishable from the best-found subspace, while the trivial baseline is clearly worse.}
    \label{fig:qfi_comparison}
\end{figure}

Numerically, Fig.~\ref{fig:qfi_comparison} shows that for the maximal contraction rate---that is, along the best-protected logical direction---the repetition code is already nearly indistinguishable from the best-found subspace, while the trivial baseline is clearly worse. This indicates that simple stabilizer structure can already preserve the relevant information remarkably well under local noise, which motivates the protocol-oriented viewpoint adopted below: rather than pursuing an abstract global optimum, we focus on local QEDCs that can turn this protection into a hardware-realizable reduction of sampling overhead.

\end{document}